\DeclareSIUnit{\dBm}{dBm}
\newcolumntype{x}{!{\vrule width 2px}}
\newcolumntype{y}{!{\vrule width 1.5px}}
\newacronym{6g}{6G}{Sixth Generation}
\newacronym{awgn}{AWGN}{additive white Gaussian noise}
\newacronym{bcd}{BCD}{block coordinate descent}
\newacronym{bec}{BEC}{binary erasure channel}
\newacronym{bler}{BLER}{block error rate}
\newacronym{blec}{BLEC}{block erasure channel}
\newacronym{bri}{BRI}{biregular irreducible}
\newacronym{clt}{CLT}{central limit theorem}
\newacronym{cp}{CP}{control plane}
\newacronym{cdf}{CDF}{cumulative distribution function}
\newacronym{csi}{CSI}{channel state information}
\newacronym{csit}{CSIT}{channel state information at transmitter}
\newacronym{dft-s-ofdm}{DFT-s-OFDM}{Discrete Fourier Transform-spread-OFDM}
\newacronym{fbl}{FBL}{finite blocklength}
\newacronym{gan}{GAN}{generative adversarial network}
\newacronym{ibl}{IBL}{infinite blocklength}
\newacronym{ldpc}{LDPC}{low-density parity-check}
\newacronym{lfp}{LFP}{leakage-failure probability}
\newacronym{ls}{LS}{least squares}
\newacronym{mac}{MAC}{medium access control}
\newacronym{mcs}{MCS}{modulation and coding scheme}
\newacronym{mimo}{MIMO}{multi-input multi-output}
\newacronym{ml}{ML}{maximum likelihood}
\newacronym{mm}{MM}{Minorize-Maximization}
\newacronym{noma}{NOMA}{non-orthogonal multi-access}
\newacronym{nom}{NOM}{non-orthogonal multiplexing}
\newacronym{ofdm}{OFDM}{orthogonal frequency-division multiplexing}
\newacronym{ofdma}{OFDMA}{orthogonal frequency-division multiple access}
\newacronym{oma}{OMA}{orthogonal multiple access}
\newacronym{papr}{PAPR}{Peak-to-Average Power Ratio}
\newacronym{pdf}{PDF}{probability density function}
\newacronym{per}{PER}{packet error rate}
\newacronym{phy}{PHY}{physical}
\newacronym{pld}{PLD}{physical layer deception}
\newacronym{pls}{PLS}{physical layer security}
\newacronym{prb}{PRB}{physical resource block}
\newacronym{psk}{PSK}{phase-shift keying}
\newacronym{sic}{SIC}{successive interference cancellation}
\newacronym{sinr}{SINR}{signal-to-interference-and-noise ratio}
\newacronym{snr}{SNR}{signal-to-noise ratio}
\newacronym{tdma}{TDMA}{time-division multiple access}
\newacronym{up}{UP}{user plane}
\newacronym{urllc}{URLLC}{ultra-reliable low-latency communication}
\newacronym{xurllc}{xURLLC}{next-generation URLLC}
\newtheorem{theorem}{Theorem}% from 'amsthm'
\newtheorem{lemma}{Lemma}% from 'amsthm'
\newtheorem{corollary}{Corollary}% from 'amsthm'
\newcommand{\removelatexerror} {\let\@latex@error\@gobble}
\newcommand{\superscript}[1]{^{\mathrm{#1}}}
\newcommand{\subscript}[1]{_{\mathrm{#1}}}
\newcommand{\diff}{\text{d}}
\newcommand{\comb}[2]{\begin{pmatrix}#1\\#2\end{pmatrix}}
\tikzstyle{note}=[rectangle, minimum width=3cm, draw = none, fill = none, minimum width = 1.5cm, anchor=center, align=left]
\tikzstyle{block}=[rectangle, draw, line width=1pt, fill = none, minimum width = 1cm, minimum height = 0.75cm, anchor=center, inner sep = 0.5mm, align=center]
\tikzstyle{arrow} = [thick,->,>=stealth]
\newif\ifreviewmode
  \renewcommand{\todo}[1]{} % hide todo notes
\begin{document}

\title{Confusions and Erasures of Error-Bounded Block Decoders with Finite Blocklength}

\author{
	\IEEEauthorblockN{
		Bin~Han\IEEEauthorrefmark{1}, %BH
        Yao~Zhu\IEEEauthorrefmark{2}, %YZ
        Rafael~F.~Schaefer\IEEEauthorrefmark{3}, %RS
		Giuseppe~Caire\IEEEauthorrefmark{4},\\ %GC
        Anke~Schmeink\IEEEauthorrefmark{2}, %AS
        H.~Vincent~Poor\IEEEauthorrefmark{5}, %VP 
        and~Hans~D.~Schotten\IEEEauthorrefmark{1}\IEEEauthorrefmark{6} %HS
	}
	\IEEEauthorblockA{
		\IEEEauthorrefmark{1}RPTU University Kaiserslautern-Landau,
        \IEEEauthorrefmark{2}RWTH Aachen University,
		\IEEEauthorrefmark{3}TU Dresden,
		\IEEEauthorrefmark{4}TU Berlin,\\
		\IEEEauthorrefmark{5}Princeton University,
		\IEEEauthorrefmark{6}German Research Center for Artificial Intelligence (DFKI GmbH)
	}
	\thanks{
    The work of B. Han and H. D. Schotten was in part supported by the Federal Ministry of Research, Technology and Space (BMFTR) of Germany in the project Open6GHub+ (16KIS2406). The work of Y. Zhu and A. Schmeink was in part supported by BMFTR in the project 6GEM+ (16KIS2409K). The work of R. F. Schaefer was in part supported by BMFTR in the project 6G-life (16KIS2413K), and by the German Research Foundation (DFG) in the Cluster of Excellence EXC
2050/2 ``Centre for Tactile Internet with Human-in-the-Loop'' (CeTI) of TU Dresden (390696704). The work of G. Caire was in part supported by BMFTR in the project 6G-RIC (16KISK030). The work of H. V. Poor was supported in part by an Innovation Grant from Princeton NextG. B. Han (bin.han@rptu.de) and Y. Zhu (yao.zhu@inda.rwth-aachen.de) are the corresponding authors.}
}

\maketitle

\begin{abstract}
	This paper investigates two distinct types of block errors -- undetected errors (confusions) and erasures -- in \ac{awgn} channels with error-bounded block decoders operating in the \ac{fbl} regime. While \ac{bler} is a common metric, it does not distinguish between confusions and erasures, which can have significantly different impacts in cross-layer protocol design, despite upper-layer protocols universally assuming \ac{phy} errors manifest as packet erasures rather than undetected corruptions -- an assumption lacking rigorous \ac{phy}-layer validation. We present a systematic analysis of confusions and erasures under \ac{bler}-constrained \ac{ml} decoding. Through sphere-packing analysis, we provide analytical bounds for both block confusion and erasure probabilities, and derive the sensitivities of these bounds to blocklength and \ac{snr}. To the best of our knowledge, this is the first study on this topic in the \ac{fbl} regime. Our findings provide theoretical validation for the block erasure channel abstraction commonly assumed in \ac{mac} and network layer protocols, confirming that, for practical \ac{fbl} codes, block confusions are negligible compared to block erasures, especially at large blocklengths and high \ac{snr}.
\end{abstract}

\begin{IEEEkeywords}
    Finite blocklength, block erasing channel, bounded decoder
\end{IEEEkeywords}

\glsresetall

\section{Introduction}\label{sec:intro}
Since decades, \ac{bler} has been widely used as a performance metrics for digital communication systems that use block coding. Especially, it has been becoming increasingly important in the context of recent developments of \ac{urllc} and \ac{xurllc}, for they generally work in the \ac{fbl} regime, where the traditional concept of channel capacity in the Shannon sense fails and lossless transmission is considered impractical in general~\cite{PPV2010channel}.

However, the \ac{bler} does not provide a complete picture of the transmission performance, as it counts both undetected block decoding errors and block erasures. More specifically, the former phenomenon occurs when the decoder confuses the true sent codeword for a wrong one with such high confidence that it fails to correct or detect such an error. In this paper, we call this type of error a block confusion. The latter, to the contrary, occurs when a bounded decoder cannot give any estimate for the received codeword with sufficient confidence, and therefore rejects all candidate codewords, issuing an ``erasure/loss'' as output. 

Block erasures, and the associated \acp{blec}, have been extensively studied in the \ac{ibl} regime. Theoretical bounds on erasure probability have been obtained for bounded decoders~\cite{Forney1968exponential,Merhav2008error,SM2010exact}. Meanwhile, little effort has been reported in the \ac{fbl} regime to distinguish block erasures from block confusions, 
despite networking protocols universally modeling \ac{phy} failures as erasures in retransmission schemes, \ac{mac} protocols, and reliability mechanisms. This fundamental cross-layer assumption -- that block errors at the \ac{phy} translate to detectable losses rather than silent corruptions at upper layers—remains theoretically unvalidated. 
% though in certain applications and scenarios, e.g., for semantic communication, the difference between them can be significant. 
A few existing \ac{fbl} works, such as \cite{HZS+2025semantic}, have explicitly noted this distinction, but for convenience they simply assumed the \ac{bler} to be approximately equal to the block erasure rate, i.e., they considered the block confusions negligible in comparison to the block erasures. Evidence supporting such an assumption, however, is missing, and the question of how to quantify the \ac{fbl} block erasure rate remains open.

In this work, we aim to fill this gap by providing a systematic analysis of the \ac{fbl} block erasure rate in \ac{awgn} channels under \ac{bler}-bounded \ac{ml} decoding. Our analysis provides the first rigorous \ac{phy}-layer justification for this standard networking abstraction, confirming when the erasure-only assumption holds in practical FBL systems. Following the classical information theoretic approach, we map the codewords onto a hyper-sphere in the blocklength-dimensional Euclidean space, outline the decision regions of decoder as hyper-spheres centered at the valid codewords, and transform the coding problem into a geometric one under sphere packing constraints. We derive probability bounds of block confusions and block erasures, and analyze their behavior. Our results confirm that the \ac{fbl} block confusion rate is negligibly low compared to the block erasure rate, especially at large blocklengths and high \ac{snr}.

The remainder of this paper is organized as follows. Sec.~\ref{sec:related} reviews related works on \ac{fbl} information theory, \ac{blec}, and use scenarios that essentially distinguish block confusions from block erasures. Sec.~\ref{sec:bound_analysis} outlines the design and performance of error-bounded block decoders allowing erasures, deriving the generic formulation of their block erasure probability and block confusion probability. Sec.~\ref{sec:bound_analysis} provides the bound analysis of these probabilities, and investigates behavior of bounds with blocklength and \ac{snr}. Finally, Sec.~\ref{sec:results} presents the results of our numerical experiments that support our analytical results, before Sec.~\ref{sec:conclusion} concluding this paper and giving outlooks to future study.

% \clearpage

\section{Related Works}\label{sec:related}
The fundamental concepts and methodologies of information theory and coding theory were established in the late 1940s and 1950s, with the ground-laying work of Shannon~\cite{Shannon1949communication} and Hamming~\cite{Hamming1950error}, modeling codewords and their decision regions as hyper-spheres closely packed in Euclidean space. The same classical approach has been adopted and extensively further developed over ensuing decades, and will be followed in this paper as well.

Classical information theory mainly focuses on the asymptotic regime where blocklength approaches infinity, relying on typicality arguments and the asymptotic equipartition property to achieve capacity. For short codes with finite blocklengths, these typicality-based approximations fail to capture the non-asymptotic behavior of error probability, necessitating refined analytical frameworks.

While early finite-length analyses existed for specific coding schemes~\cite{DPT+2002finite}, systematic \ac{fbl} information theory emerged with the foundational work of Polyanskiy et al.~\cite{PPV2010channel}, who established precise \ac{bler} bounds for \ac{awgn} channels. This result was subsequently extended to fading channels~\cite{DKP2016toward} and general multi-antenna channels~\cite{YDKP2014quasi}, and later investigated for both coherent and non-coherent multiple-antenna fading channels~\cite{AY2019coherent,QK2025noncoherent}. Since then, this \ac{bler} bound has become a fundamental tool for evaluating the performance of \ac{fbl} codes across numerous applications, e.g., cooperative relay networks~\cite{HGS2015capacity}, \ac{noma} schemes~\cite{XYC+2020noma}, wireless power transfer systems~\cite{AFSA2018wireless}, etc.

%Interests in \ac{fbl} information theory began to rise in the early 2010s, inspired by the landmark work by Polyanskiy et al.~\cite{PPV2010channel} that provides the \ac{bler} bound in simple \ac{awgn} channels, which was later extended to multi-antenna channels~\cite{YDKP2014quasi}, fading channels~\cite{DKP2016toward},    \acp{bec}~\cite{DPT+2002finite}, cooperative relay networks~\cite{HGS2015capacity}, \ac{noma} schemes~\cite{XYC+2020noma}, wireless power transfer systems~\cite{AFSA2018wireless}, etc. 

Erasure is also a classic topic in communications, with foundational contributions dating back to Forney's 1968 work~\cite{Forney1968exponential}, which established error exponents for the trade-off between errors/confusions and erasures. These bounds were subsequently refined by Merhav~\cite{Merhav2008error} and Shamai et al.~\cite{SM2010exact}. These error exponent results operate in the large deviation regime, characterizing exponential decay rates as the blocklengths approaches infinity, which differs from Polyanskiy-style analysis that determines achievable rates for fixed blocklength and error probability.
% Erasure is also a classic topic in communications, on which some selected works are worth mentioning. Discussions around the trade-off between errors/confusions and erasures dates back at least to the classical work of Forney in 1968~\cite{Forney1968exponential}, which provides an exponential bound of the \ac{bler} that was later gradually tightened by Merhav~\cite{Merhav2008error} and Shamai et al.~\cite{SM2010exact}. All these works, as mentioned in Sec.~\ref{sec:intro}, focus only on the \ac{ibl} regime. 
In this century, channels with erasures have been extensively studied on different levels. For \acp{bec}, the authors of~\cite{DGP+2006capacity} have derived the network capacity, which can be also extended to longer blocks. Focusing on \ac{ldpc} codes in \acp{bec}, channel properties are provided in ~\cite{AKT2004extrinsic}, and the \ac{fbl} performance is analyzed in~\cite{DPT+2002finite}. For $q$-ary erasure channels, the authors of \cite{LPC2013bounds} have derived the performance bounds, and methods are proposed in \cite{MTS+2022methods} to convert error channels to pure erasure channels. For \acp{blec}, performance and bounds of channel coding is analyzed in ~\cite{GQ2006coding,Didier2006new}.

As mentioned in Sec.~\ref{sec:intro}, the distinction between block confusion, block erasure, and generic block error, can be crucial in certain application scenarios such semantic communications, which prioritize the human perception that generally and significantly distinguishes the absence of information from the distortion of information~\cite{Smithson2010ignorance}. In some applications such as~\cite{HZS+2025semantic}, erasures can be exploited while confusions shall be avoided. In some others, this preference can be reversed or differ from one packet to another~\cite{Khosravirad2025rateless}.

% \todo[inline]{Generally control this within one column.}
% \begin{itemize}
	% \item \textbf{Classical information theory and coding theory:} Shannon's ground-laying work~\cite{Shannon1949communication}, the sphere-packing bound or Hamming bound~\cite{Hamming1950error}.
	% \item \textbf{Classic FBL information theory:} landmark work by Polyanskiy that provides the \ac{bler} bound in simple \ac{awgn} channels~\cite{PPV2010channel}, which was later extended to multi-antenna channels~\cite{YDKP2014quasi}, fading channels~\cite{DKP2016toward}, \ac{bec}~\cite{DPT+2002finite}, cooperative relay networks~\cite{HGS2015capacity}, \ac{noma} schemes~\cite{XYC+2020noma}, wireless power transfer systems~\cite{AFSA2018wireless}, etc.	
	% \item \textbf{Erasure channels}
	% Binary:~\cite{DGP+2006capacity} (which can be extended to longer blocks),\cite{AKT2004extrinsic} (LDPC), \cite{DPT+2002finite} (\ac{fbl})
	% $q$-ary:~\cite{LPC2013bounds, MTS+2022methods}
	% Block:~\cite{GQ2006coding,Didier2006new}
	% \item \textbf{block erasure rate bounds in IBL:} see~\cite{Forney1968exponential,Merhav2008error,SM2010exact}
% 	\item \textbf{Use cases that distinguish block confusions from block erasures:} \cite{HZS+2025semantic} serves as one, others especially in semantic communications?
% 	\item \textbf{Anything else?} (remember to update the outline paragraph in Sec.~\ref{sec:intro} accordingly if any)
% \end{itemize}

% \vspace*{\fill}
% \columnbreak

\section{Error-Bounded Decoding with Erasures}\label{sec:bounded_decoding}
While many conventional decoders are designed to select only the most likely feasible codeword, there are also implementations that allow options of erasure (selecting none) and/or list (listing multiple ones), as illustrated in Fig.~\ref{fig:decoding_classes}~\cite{Forney1968exponential}. In this paper, we focus on the case where no list but erasure is allowed.
\begin{figure}[!htpb]
	\centering
	\includegraphics[width=.75\linewidth]{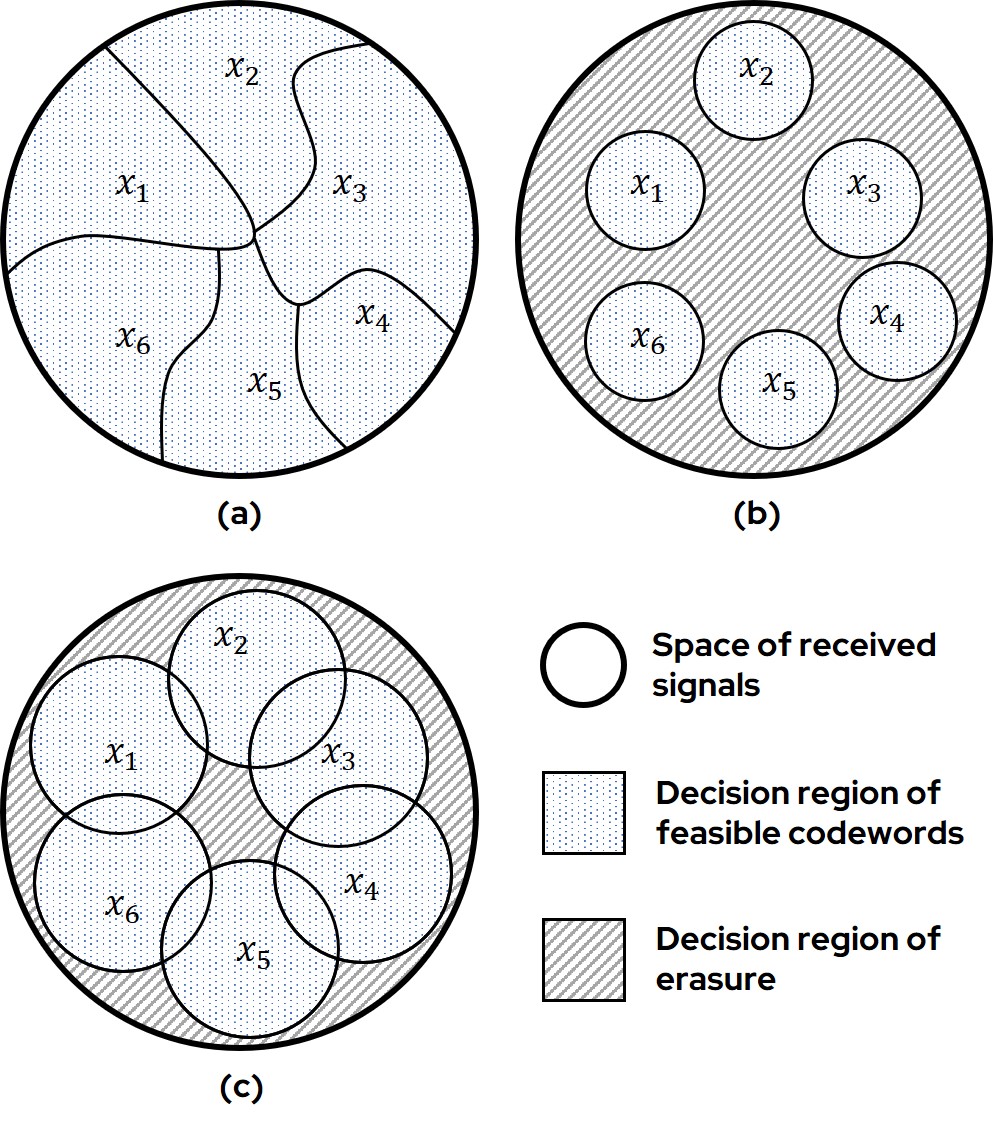}
	\caption{Schematic representation of typical decision regions: (a) ordinary decoding, (b) erasure option, and (c) list option.}
	\label{fig:decoding_classes}
\end{figure}

Consider an $M$-ary (orthogonal) codebook $\mathcal{X}$ with blocklength $n=k+r$, where $k$ is the payload length and $r$ the redundancy length. 
Now given an arbitrary code $\mathbf{x}=(x_1,x_2,\dots x_n)\in\mathcal{X}$, and an ideal channel with \ac{awgn} $\mathbf{w}\sim\mathcal{N}(0,\sigma^2I_n)$, the probability of receiving a sequence  $\mathbf{y}=\mathbf{x}+\mathbf{w}$ is given by
\begin{equation}
	P(\mathbf{y}|\mathbf{x})=\prod_{i=1}^n P(y_i|x_i)=\prod_{i=1}^n \frac{1}{\sqrt{2\pi\sigma^2}}\exp\left(-\frac{(y_i-x_i)^2}{2\sigma^2}\right)
	\label{eq:prob}
\end{equation}
So the \ac{ml} decoder of $\mathbf{x}$ is given by
\begin{equation}
	\hat{\mathbf{x}}=\arg\max_{\mathbf{x}\in\mathcal{X}} P(\mathbf{y}|\mathbf{x})=\arg\min_{\mathbf{x}\in\mathcal{X}} \sum_{i=1}^n (y_i-x_i)^2
	\label{eq:MLE}
\end{equation}
which is also the \ac{ls} decoder.

Denoting by $\mathcal{M}$ the single-symbol constellation, we consider a uniform spherical codebook $\mathcal{X}\subset\mathcal{M}^n$, in which every codeword is of the same energy $E=nE\subscript{s}$ and equally likely to be sent (with $E\subscript{s}$ the energy per symbol), and a bounded decoder that allows erasures. While it is impractical to visualize the $n$-dimensional codeword space, we can still follow Shannon's approach to draw the distance-mapped hyper-spheres as shown in Fig.~\ref{fig:codebook_sphere}, where all codewords are located on the surface of the a sphere with radius $\sqrt{E}$. We denote the shortest Euclidean distance between two neighbor codewords as $D\subscript{min}$.
\begin{figure}[!htpb]
	\centering
	\includegraphics[width=.8\linewidth,clip,trim=100 50 100 60]{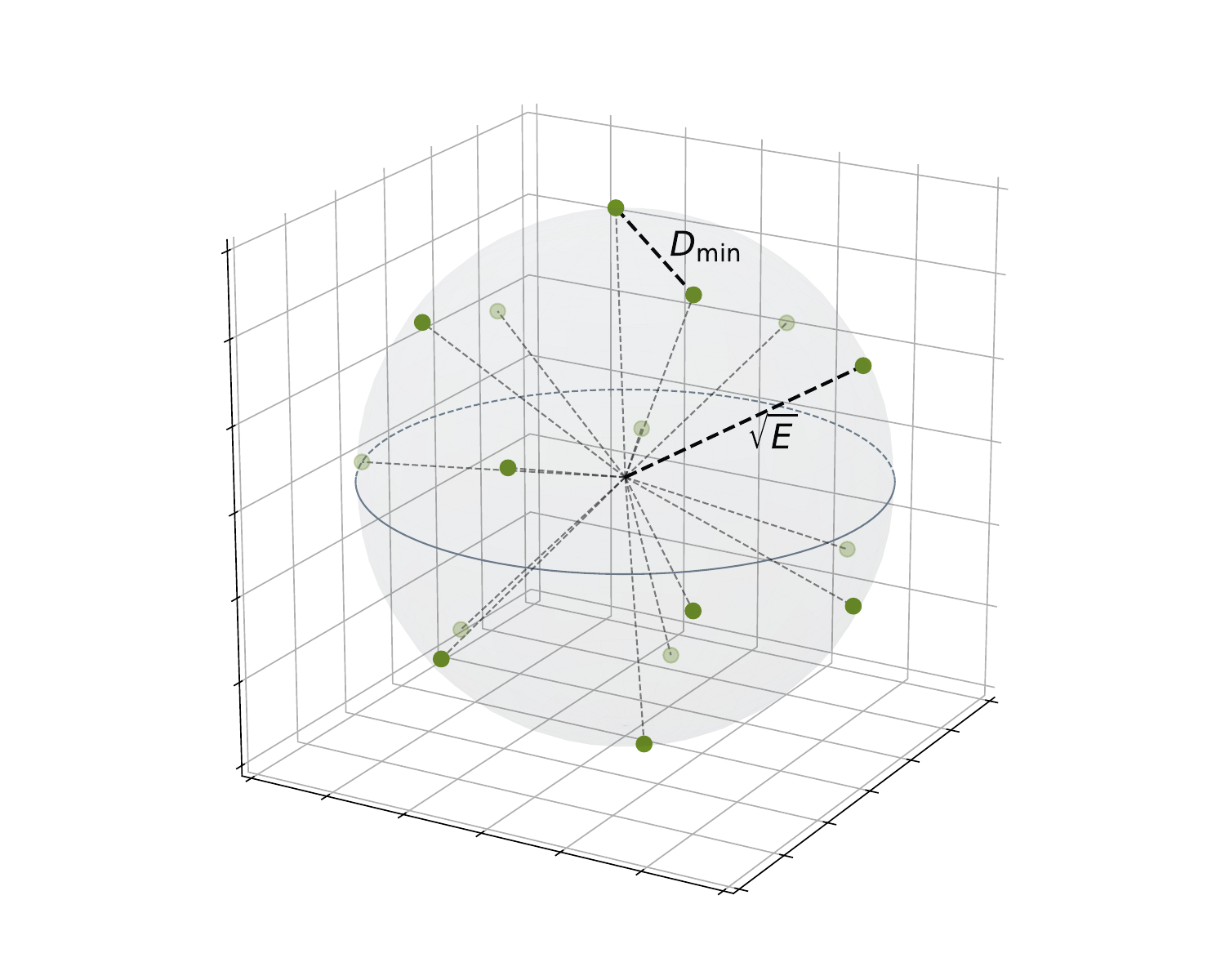}
	\caption{\ac{fbl} codewords projected onto a hyper-sphere}
	\label{fig:codebook_sphere}
\end{figure}

Now for an arbitrary codeword $\mathbf{x}\in\mathcal{X}$ be sent over the noisy channel, we denote the set of all its neighboring codewords in $\mathcal{X}$ with Hamming distance $\kappa$ as $\mathcal{V}(\mathcal{X},\mathbf{x},\kappa)$:
\begin{equation}
	\mathcal{V}(\mathcal{X},\mathbf{x},\kappa)\triangleq\left\{\mathbf{x}'\in\mathcal{X}\vert d(\mathbf{x},\mathbf{x}')=\kappa\right\},
\end{equation}
and therewith the total set of codewords in $\mathcal{X}$ other than $\mathbf{x}$:
\begin{equation}
	\mathcal{V}_\Sigma(\mathcal{X},\mathbf{x})=\underset{\kappa>0}{\cup}\mathcal{V}(\mathcal{X},\mathbf{x},\kappa).
\end{equation}
% , and the set of all its neighbor codewords with Euclidean distance $D\subscript{min}$ from it as $\mathcal{N}_\mathcal{X}^\mathbf{x}\triangleq\left\{\mathbf{x}':\Vert \mathbf{x}'-\mathbf{x}\Vert=\arg\min\limits_{\mathbf{z}\in\mathcal{V}_\Sigma(\mathcal{X},\mathbf{x})}\Vert\mathbf{z}-\mathbf{x}\Vert=D\subscript{min}\right\}$. 
Consider the hyper-sphere centered at $\mathbf{x}$ and with another codeword $\mathbf{x}'\in\mathcal{V}_\Sigma(\mathcal{X},\mathbf{x})$ on its surface, as shown in Fig.~\ref{fig:decision_spheres}. The charcoal point and the cadet blue sphere around it represent $\mathbf{x}$ and its decision region at the decoder, respectively. Similarly, the olive drab point and the copper sphere around it represent $\mathbf{x'}$ and its decision region, respectively. Here we assume that all codewords have the same decision region radius $R$. For decoders with erasure but declining list-output, it must hold that $R\leqslant D\subscript{min}/2$. When $\mathbf{y}$ falls into the decision region of $\mathbf{x}$, the decoding is correct. When it falls into the decision region of any $\mathbf{x}'\in\mathcal{V}_\Sigma(\mathcal{X},\mathbf{x})$, a block confusion occurs. When it falls outside any of the decision regions, a block erasure occurs. 
\begin{figure}[!htpb]
	\centering
	\includegraphics[width=.8\linewidth,clip,trim=100 50 100 60]{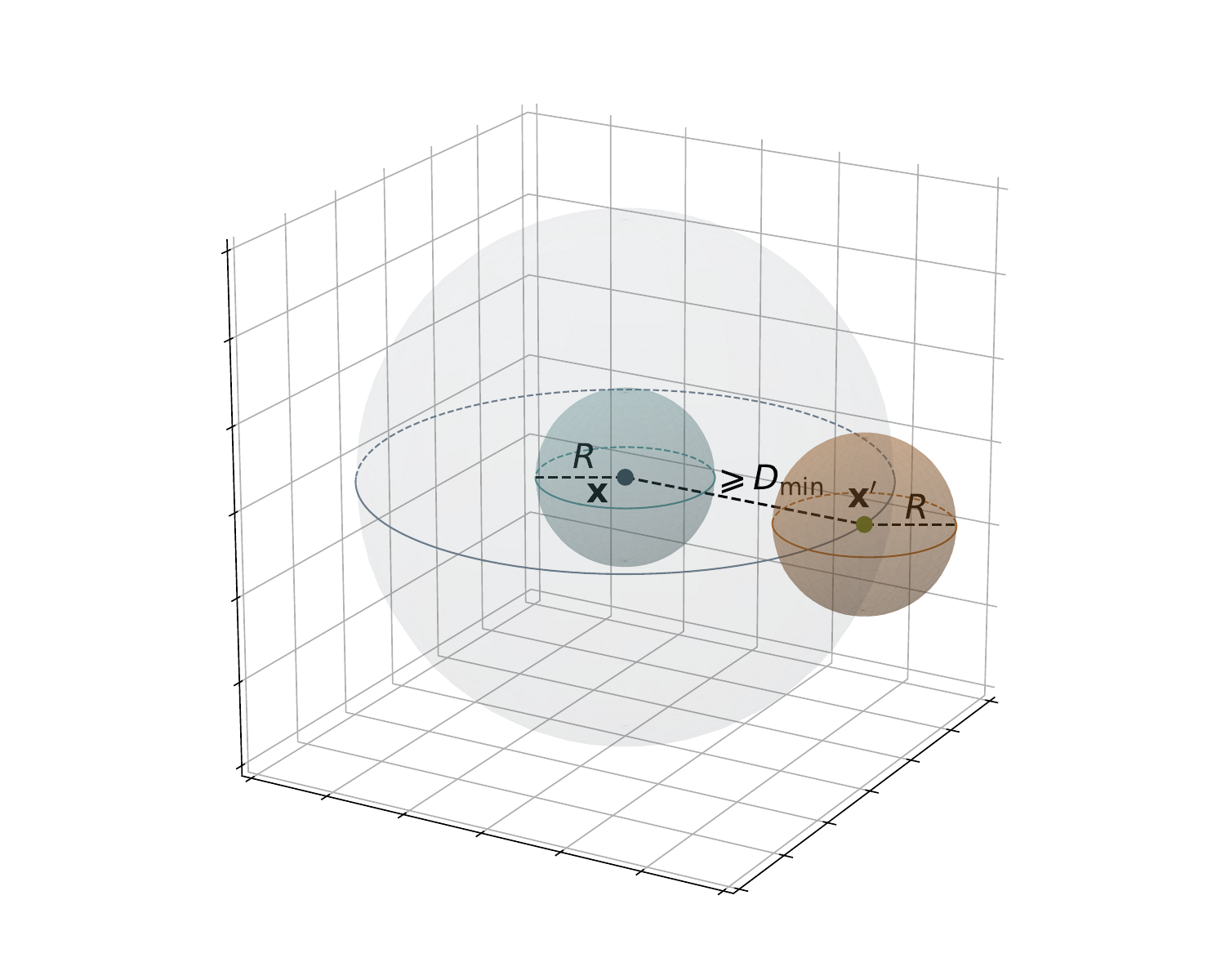}
	\caption{Decision spheres of two distinct codewords}
	\label{fig:decision_spheres}
\end{figure}

Thus, under the \ac{awgn} $\mathbf{w}$, the probability of block errors (including both block confusions and block erasures) is
\begin{equation}
	\varepsilon=P(\Vert\mathbf{w}\Vert\geqslant R)=\int_{R}^{+\infty} f_{\Vert\mathbf{w}\Vert}(w)\diff w=1-\int_{-\infty}^R f_{\Vert\mathbf{w}\Vert}(w)\diff w
	\label{eq:prob_incorrect}
\end{equation}

So for a $\varepsilon$-bounded decoder, i.e., one that only accepts the decoding result $\hat{\mathbf{x}}$ if the estimated block error rate is less than $\varepsilon$, we have the decision region radius as
\begin{equation}\label{eq:decision_region_radius}
	R(\varepsilon)=F^{-1}_{\Vert\mathbf{w}\Vert}(1-\varepsilon)=\sigma F^{-1}_{\chi_n}(1-\varepsilon),
\end{equation}
where $F_{\chi_n}(x)=P(n/2;x^2/2)$ is the \ac{cdf} of $\chi$ distribution with $n$ degrees of freedom, 
\begin{equation}
	P(s,x)=\frac{\gamma(s,x)}{\Gamma(s)}=\frac{\int_0^x t^{s-1}e^{-t}\diff t}{\int_0^\infty t^{s-1}e^{-t}\diff t}
\end{equation}
is the regularized gamma function.

Meanwhile, the probability of block confusion mistakening $\mathbf{x}$ into $\mathbf{x}'$ is given by
\begin{equation}
	\begin{split}
		&P(\mathbf{x}\to\mathbf{x'})\triangleq P(\hat{\mathbf{x}}=\mathbf{x}'|\mathbf{x})
		=P(\Vert\mathbf{w}-\mathbf{x'}+\mathbf{x}\Vert\leqslant R)
		% \\=&F_{\chi_n'}\left(\frac{R}{\sigma};\lambda=\frac{\Vert\mathbf{x}'-\mathbf{x}\Vert}{\sigma}\right)
	\end{split}.\label{eq:undetected_err}
\end{equation}
% where $F_{\chi_n'}(z;\lambda)=1-Q_{\frac{n}{2}}\left(\sqrt{\lambda},\sqrt{z}\right)$ is the \ac{cdf} of the non-central chi-square distribution with $n$ degrees of freedom and non-centrality parameter $\lambda$, and $Q_{\nu}(a,b)$ is the Marcum Q-function
% \begin{equation}
% 	Q_\nu(a,b)=\frac{1}{a^{\nu-1}}\int_{b}^{+\infty} x^{\nu}\exp\left(-\frac{x^2+a^2}{2}\right)I_{\nu-1}(ax)\diff x,
% \end{equation}
% with $I_{\nu-1}$ the modified Bessel function of first kind or order $\nu-1$.
To obtain this probability, we consider the conditional probability of $\mathbf{w}$ falls into the ball $B_R(\mathbf{x}'-\mathbf{x})$, given the noise norm $\Vert\mathbf{w}\Vert=w$, where $B_R(\mathbf{x})$ is the $R$-radius ball centered at $\mathbf{x}$. This conditional probability is captured by the spherical cap area of the ball $B_w(\mathbf{0})$ intersected with $B_R(\mathbf{x}'-\mathbf{x})$, divided by the total surface of $B_w(\mathbf{0})$:
% \todo[bh]{maybe a figure to illustrate}
\begin{equation}
	P\left[\mathbf{w}\in B_R(\mathbf{x}'-\mathbf{x})~\vert~\Vert\mathbf{w}\Vert=w\right]=\Omega_n[\theta_w(\Vert\mathbf{x}'-\mathbf{x}\Vert,R)]
\end{equation}
where $\theta_w$ is the cap angle and $\Omega_n$ the angle fraction:
\begin{align}
	\theta_w(D,R)&=\arccos\left(\frac{r^2+D^2-R^2}{2rD}\right)\\
	\Omega_n(\theta)&=\frac{\int_0^\theta\left(\sin\phi\right)^2\diff\phi}{\int_0^\pi\left(\sin\phi\right)^2\diff\phi}.\label{eq:angle_fraction}
\end{align}
Furthermore, we know that $\Vert\mathbf{w}\Vert/\sigma\sim\chi_n$, so that the pairwise block confusion rate~\eqref{eq:undetected_err} is dependent not on the direction of $\mathbf{x}'-\mathbf{x}$ but only its norm: 
\begin{equation}\label{eq:pairwise_con_prob}
	P(\mathbf{x}\to\mathbf{x}')=P\subscript{pair}(\Vert\mathbf{x}'-\mathbf{x}\Vert) 
\end{equation}
where
\begin{equation}\label{eq:con_prob_upon_distance}
	P\subscript{pair}(D)=\int\limits_{(D-R)/\sigma}^{(D+R)/\sigma}f_{\chi_n}(u)\Omega_n[\theta_w(\sigma u,R)]\diff u.
\end{equation}
The overall block confusion rate is
\begin{equation}\label{eq:overall_con_prob}
	\begin{split}
		P\subscript{con}=&\sum\limits_{\mathbf{x}\in\mathcal{X}}\left[P(\mathbf{x})\sum\limits_{\mathbf{x}'\in\mathcal{V}_\Sigma(\mathcal{X},\mathbf{x})}P(\mathbf{x}\to\mathbf{x}')\right]
		\\=&\frac{1}{\vert\mathcal{X}\vert}\sum\limits_{\substack{\mathbf{x}\in\mathcal{X}\\\mathbf{x}'\in\mathcal{V}_\Sigma(\mathcal{X},\mathbf{x})}}P(\mathbf{x}\to\mathbf{x}')
	\end{split}
\end{equation}
and the block erasure rate is given by
% \begin{equation}
	$P\subscript{ers}=\varepsilon-P\subscript{con}$.
% \end{equation}
Clealry, each lower bound of $P\subscript{con}$ gives an upper bound of $P\subscript{ers}$, and each upper bound of $P\subscript{con}$ gives a lower bound of $P\subscript{ers}$. 

For generic \ac{ml} decoders, the sensitivity of the \ac{bler} and block erasure rates to $R$ is easy to capture:
\begin{theorem}\label{th:monotonicity_ers_and_con_regarding_R}
	Given fixed $\mathcal{X}$ with certain $(\mathcal{X}, E,\sigma^2)$, both $\varepsilon$ and $P\subscript{ers}$ are monotonically decreasing w.r.t. $R$, while $P\subscript{con}$ monotonically increases. 
\end{theorem}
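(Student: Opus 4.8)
The plan is to treat $R$ as the free design parameter and establish the three monotonicity claims separately, then combine them. The cleanest route avoids differentiating the integral representation \eqref{eq:con_prob_upon_distance} and instead relies on elementary monotonicity of tail probabilities and of measures under set inclusion.

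First I would handle $\varepsilon$. By \eqref{eq:prob_incorrect}, $\varepsilon(R)=P(\Vert\mathbf{w}\Vert\geqslant R)$ is precisely the complementary \ac{cdf} (survival function) of the nonnegative random variable $\Vert\mathbf{w}\Vert$, which is $\sigma\chi_n$-distributed and hence admits a strictly positive density on $(0,\infty)$. Thus $\varepsilon$ is differentiable with $\frac{\diff\varepsilon}{\diff R}=-f_{\Vert\mathbf{w}\Vert}(R)<0$, so $\varepsilon$ is strictly decreasing in $R$; equivalently, a larger $R$ integrates the tail over a strictly smaller ray.

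Next, for $P\subscript{con}$ I would argue at the level of the pairwise terms in \eqref{eq:undetected_err}. Each $P(\mathbf{x}\to\mathbf{x}')=P\!\left(\mathbf{w}\in B_R(\mathbf{x}'-\mathbf{x})\right)$ is the mass the noise assigns to the radius-$R$ ball centered at $\mathbf{x}'-\mathbf{x}$. Since $R_1\leqslant R_2$ implies the nesting $B_{R_1}(\mathbf{x}'-\mathbf{x})\subseteq B_{R_2}(\mathbf{x}'-\mathbf{x})$, monotonicity of the probability measure gives $P(\mathbf{x}\to\mathbf{x}')$ non-decreasing in $R$; because the Gaussian density is strictly positive everywhere, the added annular shell carries positive mass and the inequality is in fact strict. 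As $P\subscript{con}$ in \eqref{eq:overall_con_prob} is a uniformly weighted nonnegative sum of these pairwise probabilities over a fixed codebook, it inherits the strict monotone increase in $R$.

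Finally, $P\subscript{ers}=\varepsilon-P\subscript{con}$ is the difference of a strictly decreasing function and a strictly increasing one, hence strictly decreasing in $R$, completing the argument. I expect the only genuine temptation for error to be the alternative, far more laborious route of applying the Leibniz rule to \eqref{eq:con_prob_upon_distance}, where both integration limits $(D\pm R)/\sigma$ and the integrand $\Omega_n[\theta_w(\sigma u,R)]$ depend on $R$ simultaneously; the set-inclusion argument circumvents this entirely, and keeping that step correctly phrased is the part I would be most careful about.
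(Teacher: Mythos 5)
Your proposal is correct and is precisely the elementary argument the paper had in mind when it marked this proof as trivial and omitted it: $\varepsilon(R)$ is the survival function of the $\sigma\chi_n$-distributed noise norm, each pairwise term $P(\mathbf{x}\to\mathbf{x}')$ is the Gaussian measure of the nested ball $B_R(\mathbf{x}'-\mathbf{x})$ and hence strictly increasing in $R$, and $P\subscript{ers}=\varepsilon-P\subscript{con}$ inherits the strict decrease. Your choice to argue by set inclusion and monotonicity of the measure, rather than differentiating Eq.~\eqref{eq:con_prob_upon_distance} under the Leibniz rule with $R$ appearing in both the limits and the integrand, is exactly the right simplification.
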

\begin{proof}
	Trivial, omitted.
\end{proof}

However, in this study our interest mainly focuses on the coding aspect of this problem: given certain $(\varepsilon,E,\sigma^2)$ (so that with Eq.~\eqref{eq:decision_region_radius} $R$ is fixed), by selecting a proper codebook $\mathcal{X}\subseteq\mathcal{M}^n$, what is the achievably minimal $P\subscript{ers}$? For convenience of discussion, here we consider the case $\vert\mathcal{X}\vert=M^k$, i.e. the payload code space is fully utilized.

\section{Bound Analysis}\label{sec:bound_analysis}
\subsection{Bounds of the Minimum Distance between Codewords}
\label{subsec:bounds_dmin}
Eq.~\eqref{eq:pairwise_con_prob}--\eqref{eq:overall_con_prob} reveal that the block confusion probability is dominated by the Euclidean distance between distinct codewords. Moreover, given a tuple $(E,n)$, the Euclidean distance between any pair of codewords $(\mathbf{x},\mathbf{x}')\in\mathcal{M}^n\times\mathcal{M}^n$ is uniquely coupled with their Hamming distance $d(\mathbf{x},\mathbf{x}')$:
\begin{equation}\label{eq:hamming_to_euclidean}
	D(\mathbf{x},\mathbf{x}')=\sqrt{E\cdot d(\mathbf{x},\mathbf{x}')}
\end{equation}

Thus, we first focus on the Hamming distance, especially the mininum Hamming distance $d\subscript{min}(\mathcal{X})$ between distinct codewords, and can obtain the following lemma:
\begin{lemma}\label{lem:dmin_bounds}
	Given fixed $(M,n,k,E,\sigma)$, for any $\varepsilon$-bounded decoder that rejects list-output, the minimum Hamming distance $d\subscript{min}(\mathcal{X})$ between any pair of distinct  codewords in a $M^k$-sized codebook $\mathcal{X}$ is always bounded between $[d\subscript{min}\superscript{min}, d\subscript{min}\superscript{max}]$ where
	\begin{align}
		d\subscript{min}\superscript{min}&=\left\lceil\frac{4R^2(\varepsilon)}{E}\right\rceil\label{eq:lower_bound_dmin},\\
		d\subscript{min}\superscript{max}&=2\min\left\{t\in\mathbb{N}\left\vert\sum\limits_{l=0}^t\comb{n}{l}(M-1)^l>M^r\right.\right\}.\label{eq:upper_bound_dmin}
	\end{align}
\end{lemma}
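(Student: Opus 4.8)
The plan is to prove the two bounds separately, since they originate from opposite constraints: the lower bound is forced by the decoder's no-list requirement (which pushes codewords apart), while the upper bound is a cap imposed by a classical sphere-packing argument (which limits how far apart $M^k$ codewords can be).

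For the lower bound $d\subscript{min}\superscript{min}$, I would start from the geometric condition established above, namely that a decoder allowing erasures but rejecting list output needs pairwise-disjoint decision spheres, so $R(\varepsilon)\le D\subscript{min}/2$, i.e.\ $D\subscript{min}\ge 2R(\varepsilon)$. Since the minimum Euclidean distance is coupled to the minimum Hamming distance through Eq.~\eqref{eq:hamming_to_euclidean} as $D\subscript{min}=\sqrt{E\,d\subscript{min}(\mathcal{X})}$, squaring gives $E\,d\subscript{min}(\mathcal{X})\ge 4R^2(\varepsilon)$, hence $d\subscript{min}(\mathcal{X})\ge 4R^2(\varepsilon)/E$. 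Because the Hamming distance is integer-valued, taking the ceiling yields Eq.~\eqref{eq:lower_bound_dmin}.

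For the upper bound $d\subscript{min}\superscript{max}$, I would invoke the $M$-ary sphere-packing (Hamming) bound. With packing radius $t=\lfloor(d\subscript{min}(\mathcal{X})-1)/2\rfloor$, the triangle inequality guarantees that the Hamming balls of radius $t$ centered at the $M^k$ codewords are pairwise disjoint, since any shared point would force a codeword pair at distance $\le 2t\le d\subscript{min}(\mathcal{X})-1<d\subscript{min}(\mathcal{X})$. Counting lattice points then gives $M^k\sum_{l=0}^{t}\binom{n}{l}(M-1)^l\le M^n$, and dividing by $M^k$ with $n=k+r$ reduces this to $\sum_{l=0}^{t}\binom{n}{l}(M-1)^l\le M^r$. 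Writing $t^\star=\min\{t\in\mathbb{N}\mid\sum_{l=0}^{t}\binom{n}{l}(M-1)^l>M^r\}$ as in Eq.~\eqref{eq:upper_bound_dmin}, monotonicity of the ball volume in $t$ forces $t\le t^\star-1$, and unwinding the floor via $\lfloor(d-1)/2\rfloor\le t^\star-1\Leftrightarrow d\le 2t^\star$ delivers $d\subscript{min}(\mathcal{X})\le 2t^\star=d\subscript{min}\superscript{max}$.

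I expect the upper bound to be the delicate part: the bookkeeping that converts $\lfloor(d\subscript{min}(\mathcal{X})-1)/2\rfloor\le t^\star-1$ into $d\subscript{min}(\mathcal{X})\le 2t^\star$, and the clean verification that the packing balls remain disjoint precisely at radius $t=\lfloor(d\subscript{min}(\mathcal{X})-1)/2\rfloor$, must be stated carefully. The lower bound, by contrast, is essentially immediate once the no-list geometric condition is combined with the Hamming--Euclidean correspondence and the integrality of $d\subscript{min}(\mathcal{X})$.
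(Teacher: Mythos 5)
Your proposal is correct and follows essentially the same route as the paper's own proof: the lower bound from the no-list disjointness condition $D_{\min}(\mathcal{X})\geqslant 2R(\varepsilon)$ combined with $D=\sqrt{E\,d}$ and integrality of the Hamming distance, and the upper bound from the $M$-ary Hamming (sphere-packing) bound with $t=\lfloor(d_{\min}-1)/2\rfloor$ and $\vert\mathcal{X}\vert=M^k$ reducing to $\sum_{l=0}^{t}\binom{n}{l}(M-1)^l\leqslant M^{r}$. Your bookkeeping $t\leqslant t^\star-1\Rightarrow d_{\min}\leqslant 2t^\star$ matches the paper's $d_{\min}\leqslant 2t+2$ step exactly, and you additionally spell out the ball-disjointness argument that the paper leaves implicit in citing the Hamming bound.
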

\begin{proof}
	See Appendix~\ref{app:dmin_bounds}.
\end{proof}
These bounds can be transformed with Eq.~\eqref{eq:hamming_to_euclidean} into the corresponding Euclidean distance bounds:
\begin{equation}
	\sqrt{Ed\subscript{min}\superscript{min}}\leqslant D\subscript{min}(\mathcal{X}) \leqslant \sqrt{Ed\subscript{min}\superscript{max}}.
\end{equation}
\subsection{Lower Bound of the Block Confusion Rate}
Subsequently, we notice the monotonicity of $P\subscript{pair}$ w.r.t. $D$:
\begin{lemma}\label{lem:monotonicity_Ppair}
	Given fixed $(E,\sigma)$, the pairwise block confusion rate $P\subscript{pair}$ is a monotonically decreasing and convex function of $D=\sqrt{Ed}$ for all $D\geqslant 2R(\varepsilon)$.
\end{lemma}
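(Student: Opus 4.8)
The plan is to bypass the cap-angle representation \eqref{eq:con_prob_upon_distance} entirely and work from the geometric definition \eqref{eq:undetected_err}, which identifies $P\subscript{pair}(D)$ with the Gaussian measure of a radius-$R$ ball whose center lies at distance $D$ from the origin: writing $\mathbf{v}=\mathbf{x}'-\mathbf{x}$ with $\Vert\mathbf{v}\Vert=D$, we have $P\subscript{pair}(D)=P(\Vert\mathbf{w}-\mathbf{v}\Vert\le R)$ for $\mathbf{w}\sim\mathcal{N}(0,\sigma^2 I_n)$. By rotational invariance of $\mathbf{w}$ I may take $\mathbf{v}=D\mathbf{e}_1$, shift the integration variable by $\mathbf{v}$, and integrate out the $n-1$ coordinates orthogonal to $\mathbf{e}_1$ by Fubini. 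This collapses the problem to a one-dimensional convolution,
\begin{equation}
	P\subscript{pair}(D)=\int_{-R}^{R}\phi(D+s)\,G(s)\,\diff s,\label{eq:plan_conv}
\end{equation}
where $\phi(t)=(2\pi\sigma^2)^{-1/2}e^{-t^2/(2\sigma^2)}$ is the scalar Gaussian density and $G(s)=F_{\chi_{n-1}}\!\big(\sqrt{R^2-s^2}/\sigma\big)$ is the Gaussian mass of the orthogonal $(n-1)$-ball slice of radius $\sqrt{R^2-s^2}$. The weight $G$ is nonnegative, even in $s$, supported on $[-R,R]$, strictly decreasing in $|s|$, and vanishes at the endpoints $s=\pm R$; these four properties are all I will need.

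With \eqref{eq:plan_conv} in hand, both assertions reduce to differentiating under the integral sign, which is legitimate because the integrand is smooth in $D$ and compactly supported in $s$ (dominated convergence). For monotonicity I compute $P\subscript{pair}'(D)=\int_{-R}^{R}\phi'(D+s)\,G(s)\,\diff s$ and use $\phi'(t)=-(t/\sigma^2)\phi(t)$. On the stated domain $D\ge 2R$ every argument satisfies $D+s\ge D-R\ge R>0$, so $\phi'(D+s)\le 0$; since $G\ge 0$, the integral is nonpositive and $P\subscript{pair}$ is strictly decreasing. This step needs only $D\ge R$, hence holds throughout $D\ge 2R$ without further assumptions.

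For convexity I differentiate once more, $P\subscript{pair}''(D)=\int_{-R}^{R}\phi''(D+s)\,G(s)\,\diff s$, and use $\phi''(t)=\sigma^{-4}(t^2-\sigma^2)\phi(t)$. The integrand is thus nonnegative precisely where $D+s\ge\sigma$. On $D\ge 2R$ the smallest argument is $D-R\ge R$, so a clean sufficient condition for $P\subscript{pair}''\ge 0$ is $R\ge\sigma$, since then $2R\ge R+\sigma$ forces $D+s\ge\sigma$ for all admissible $s$. This is exactly the operative regime: by \eqref{eq:decision_region_radius}, $R\ge\sigma\Leftrightarrow F^{-1}_{\chi_n}(1-\varepsilon)\ge 1$, which holds for every blocklength $n$ once $\varepsilon\le 1-F_{\chi_n}(1)$ -- a bound that is vacuous for large $n$ and mild for the small target \ac{bler} of interest. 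Moreover the domain $D\ge 2R$ is the only one that matters, since any $R$-bounded, list-free decoder obeys $R\le D\subscript{min}/2$, so all codeword distances satisfy $D\ge D\subscript{min}\ge 2R$.

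The main obstacle is the residual regime $R<\sigma$, where $\phi''(D+s)$ genuinely changes sign over $s\in[-R,R]$ and pointwise positivity fails. Here I would exploit the endpoint-vanishing of $G$ via integration by parts: since $G(\pm R)=0$ the boundary terms drop, and after folding the integral about $s=0$ using the evenness of $G$ one obtains $P\subscript{pair}''(D)=\int_{0}^{R}G'(s)\,[\phi'(D-s)-\phi'(D+s)]\,\diff s$ with $G'\le 0$ on $[0,R]$. It then suffices to show the bracket is nonpositive, i.e.\ that $\phi'$ is nondecreasing across $[D-s,D+s]$; this is immediate once $D-s\ge\sigma$ (again the $R\ge\sigma$ case) and requires a finer averaged-monotonicity argument otherwise. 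I expect this sign analysis in the sub-$\sigma$ window to be the only delicate point; everything else is the routine convolution calculus above.
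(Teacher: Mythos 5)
Your route is correct and genuinely different from the paper's. The paper stays in the polar/cap-angle representation \eqref{eq:con_prob_upon_distance}: it differentiates $\int f_{\chi_n}(u)\,\Omega_n[\theta_w(\sigma u,R)]\,\diff u$ under the integral sign (boundary terms vanish because $\theta_w=0$ at $u_\pm=(D\pm R)/\sigma$), gets monotonicity from a claimed pointwise sign $\partial\theta_w/\partial D<0$ on the whole integration range, and convexity from $\partial^2\Omega_n/\partial\theta^2>0$ (valid, since $D\geqslant 2R$ forces $\theta_w<\pi/2$) together with an \emph{asserted} $\partial^2\theta_w/\partial D^2>0$. Your Fubini collapse to the scalar convolution $P\subscript{pair}(D)=\int_{-R}^{R}\phi(D+s)\,G(s)\,\diff s$ buys two concrete things. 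First, your monotonicity proof is airtight where the paper's pointwise claim is not: from $\cos\theta_w=(r^2+D^2-R^2)/(2rD)$ one computes $\partial\theta_w/\partial D>0$ on the sub-range $r\in\left(\sqrt{D^2+R^2},\,D+R\right)$, so the paper's integrand does \emph{not} have one sign; your observation that $\phi'(D+s)\leqslant 0$ for all admissible $s$ once $D\geqslant R$ settles the sign of the whole integral regardless, and needs only $G\geqslant 0$. Second, the convolution form makes the convexity mechanism transparent: $\phi''(t)\geqslant 0$ exactly for $|t|\geqslant\sigma$, so $D-R\geqslant\sigma$ suffices pointwise.

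On the residual regime $R<\sigma$ that you flag as ``delicate'': be aware it is not merely delicate --- the unconditional statement fails there, so no refinement of your integration-by-parts step can close it. For small $R$ one has $P\subscript{pair}(D)\approx c\,e^{-D^2/(2\sigma^2)}$ up to $O(DR/\sigma^2)$ corrections, whose second derivative is proportional to $(D^2-\sigma^2)$; hence $P\subscript{pair}$ is strictly concave on a nonempty portion of $[2R,\sigma)$ whenever $2R<\sigma$ (your bracket $\phi'(D-s)-\phi'(D+s)$ is then genuinely of the wrong sign, not just hard to control). Your sufficient condition $R\geqslant\sigma$, i.e. $F^{-1}_{\chi_n}(1-\varepsilon)\geqslant 1$ via \eqref{eq:decision_region_radius}, equivalently $\varepsilon\leqslant 1-F_{\chi_n}(1)$, is therefore in substance the correct hypothesis, and it is essentially vacuous in the FBL regime since $F^{-1}_{\chi_n}(1-\varepsilon)$ grows like $\sqrt{n}$. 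Note the paper's own convexity step does not handle this window either: its unproven assertion $\partial^2\theta_w/\partial D^2>0$ cannot rescue the $R<\sigma$ case. Net assessment: your monotonicity proof is complete and strictly more rigorous than the paper's; your convexity proof is complete on the practically exhaustive regime $D\geqslant R+\sigma$, and the leftover window is a counterexample region for the lemma as stated, not a gap in your argument --- though a fully honest write-up should state the condition $R\geqslant\sigma$ (or $D\geqslant R+\sigma$) explicitly in the lemma rather than leave it to a remark.
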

\begin{proof}
	See Appendix~\ref{app:monotonicity_Ppair}.
\end{proof}
This reveals a lower bound of the confusion rate $P\subscript{con}$:
\begin{theorem}
	\label{th:lower_bound_Pcon}
	Given fixed $(M,n,k,E)$, for any $\varepsilon$-bounded decoder that rejects list-output, the block confusion rate $P\subscript{con}$ is lower-bounded by
	\begin{equation}\label{eq:lower_bound_Pcon}
		P\subscript{con}\superscript{LB}\triangleq\comb{n}{d\subscript{min}\superscript{max}}\frac{(M-1)^{d\subscript{min}\superscript{max}}}{M^{n-k}}P\subscript{pair}\left(\sqrt{Ed\subscript{min}\superscript{max}}\right).
	\end{equation}
\end{theorem}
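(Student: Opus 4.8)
The plan is to start from the exact expression~\eqref{eq:overall_con_prob} and reorganize the double sum by grouping the codeword pairs according to their Hamming distance. Writing $A_d(\mathbf{x})=\vert\mathcal{V}(\mathcal{X},\mathbf{x},d)\vert$ for the number of codewords at Hamming distance $d$ from $\mathbf{x}$, and invoking the Hamming-to-Euclidean relation~\eqref{eq:hamming_to_euclidean} together with the distance-only dependence~\eqref{eq:pairwise_con_prob}, I would recast
\[
P\subscript{con}=\frac{1}{M^k}\sum_{\mathbf{x}\in\mathcal{X}}\sum_{d\geqslant d\subscript{min}(\mathcal{X})}A_d(\mathbf{x})\,P\subscript{pair}\!\left(\sqrt{Ed}\right).
\]
Before manipulating this, I would verify that Lemma~\ref{lem:monotonicity_Ppair} applies to every term: since each pair obeys $d\geqslant d\subscript{min}(\mathcal{X})\geqslant d\subscript{min}\superscript{min}=\lceil 4R^2(\varepsilon)/E\rceil$ by the lower bound~\eqref{eq:lower_bound_dmin} of Lemma~\ref{lem:dmin_bounds}, every Euclidean distance satisfies $\sqrt{Ed}\geqslant 2R(\varepsilon)$, so $P\subscript{pair}$ is decreasing and convex over the whole range of distances that actually occur in the codebook.

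Next, because every summand is nonnegative, I would obtain a lower bound by discarding all pairs with $d>d\subscript{min}\superscript{max}$ and retaining only those with $d\subscript{min}(\mathcal{X})\leqslant d\leqslant d\subscript{min}\superscript{max}$; this range is nonempty precisely because Lemma~\ref{lem:dmin_bounds} guarantees $d\subscript{min}(\mathcal{X})\leqslant d\subscript{min}\superscript{max}$. On the retained terms the monotonicity of $P\subscript{pair}$ yields $P\subscript{pair}(\sqrt{Ed})\geqslant P\subscript{pair}(\sqrt{Ed\subscript{min}\superscript{max}})$, so that
\[
P\subscript{con}\geqslant P\subscript{pair}\!\left(\sqrt{Ed\subscript{min}\superscript{max}}\right)\cdot\frac{1}{M^k}\sum_{\mathbf{x}\in\mathcal{X}}\big\vert\{\mathbf{x}'\in\mathcal{X}:0<d(\mathbf{x},\mathbf{x}')\leqslant d\subscript{min}\superscript{max}\}\big\vert.
\]
The problem thus reduces to showing that the \emph{average} number of codewords lying within Hamming radius $d\subscript{min}\superscript{max}$ of a codeword is at least $\comb{n}{d\subscript{min}\superscript{max}}(M-1)^{d\subscript{min}\superscript{max}}/M^{r}$, i.e.\ at least the full shell population $N_{d\subscript{min}\superscript{max}}=\comb{n}{d\subscript{min}\superscript{max}}(M-1)^{d\subscript{min}\superscript{max}}$ deflated by the codebook sparsity $M^{-r}$.

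For this final step I would return to the sphere-packing argument underlying~\eqref{eq:upper_bound_dmin}. Writing $d\subscript{min}\superscript{max}=2t^\ast$ with $t^\ast$ the minimizer in~\eqref{eq:upper_bound_dmin}, the defining inequality $\sum_{l=0}^{t^\ast}\comb{n}{l}(M-1)^l>M^r$ means the Hamming balls of radius $t^\ast$ centered at the $M^k$ codewords have total volume $M^k\sum_{l=0}^{t^\ast}\comb{n}{l}(M-1)^l>M^{n}$, so they cannot be disjoint and must overcover $\mathcal{M}^n$; quantifying this excess through an inclusion--exclusion / double-counting argument forces a guaranteed number of codeword pairs at Hamming distance at most $2t^\ast=d\subscript{min}\superscript{max}$, which I would then aim to convert into the stated per-codeword multiplicity. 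I expect this counting to be the main obstacle: turning the crude ``the balls overlap'' conclusion into the precise deflated-shell count $N_{d\subscript{min}\superscript{max}}/M^r$ requires controlling the sizes of the pairwise ball intersections and the typical distribution of near-minimum-distance neighbors, and it is here that any looseness---or an implicit random/typical-code assumption---would enter. The preceding reduction, by contrast, follows routinely from the two invoked lemmas.
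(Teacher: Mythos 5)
Your reduction is sound and coincides with the paper's own opening moves: the paper likewise starts from \eqref{eq:overall_con_prob}, discards all but the closest pairs, and uses the monotonicity of $P\subscript{pair}$ (Lemma~\ref{lem:monotonicity_Ppair}) together with $d\subscript{min}(\mathcal{X})\leqslant d\subscript{min}\superscript{max}$ from Lemma~\ref{lem:dmin_bounds} to replace each retained term by $P\subscript{pair}\left(\sqrt{Ed\subscript{min}\superscript{max}}\right)$. Your bookkeeping is in fact slightly cleaner: retaining every pair with $d\leqslant d\subscript{min}\superscript{max}$ avoids the paper's loose intermediate move of bounding a sum over $\mathcal{V}(\mathcal{X},\mathbf{x},d\subscript{min})$ by one over $\mathcal{V}(\mathcal{X},\mathbf{x},d\subscript{min}\superscript{max})$, and your explicit verification that $\sqrt{Ed}\geqslant 2R(\varepsilon)$ (so Lemma~\ref{lem:monotonicity_Ppair} applies throughout) is left implicit in the paper.

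The counting step you flag is where your attempt genuinely stops, and the route you sketch cannot close it. The overcovering consequence of \eqref{eq:upper_bound_dmin}, namely $M^k\sum_{l=0}^{t^\ast}\comb{n}{l}(M-1)^l>M^n$, guarantees only an excess coverage of order $M^k$ points in total, i.e.\ at best order-one overlap incidences per codeword on average (and fewer distinct close pairs once the large pairwise ball intersections are accounted for); this merely re-derives the existence statement $d\subscript{min}(\mathcal{X})\leqslant d\subscript{min}\superscript{max}$ of Lemma~\ref{lem:dmin_bounds}, whereas the required deflated-shell count $\comb{n}{d\subscript{min}\superscript{max}}(M-1)^{d\subscript{min}\superscript{max}}/M^r$ typically far exceeds one, since the shell at radius $2t^\ast$ is vastly larger than the ball of radius $t^\ast$ whose volume barely exceeds $M^r$. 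No inclusion--exclusion refinement of the volume argument will produce a per-codeword neighbor count of this magnitude for an \emph{arbitrary} codebook. You should know, however, that the paper does not prove this step rigorously either: its proof asserts that $d\subscript{min}\superscript{max}$ is attained when the $M^k$ codewords are ``possibly most uniformly distributed'' in $\mathcal{M}^n$ and then \emph{estimates} $\vert\mathcal{V}(\mathcal{X},\mathbf{x},d\subscript{min}\superscript{max})\vert\approx M^{k-n}\comb{n}{d\subscript{min}\superscript{max}}(M-1)^{d\subscript{min}\superscript{max}}$ ``with high accuracy'' --- exactly the deflation by the codebook density $M^{-r}$ that you reduced the problem to, and exactly the implicit random/typical-code assumption you predicted would have to enter (it is the exact expected neighbor count for a uniformly drawn codebook). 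So your diagnosis of where rigor is lost is accurate; the gap is real in your attempt, but it is bridged in the paper by a heuristic density estimate rather than by a combinatorial argument.
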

\begin{proof}
	See Appendix~\ref{app:lower_bound_Pcon}.
\end{proof}
Correspondingly, the block erasure rate is upper-bounded:
\begin{equation}
	P\subscript{ers}\leqslant\varepsilon-P\subscript{con}\superscript{LB}.
	% P\subscript{ers}\leqslant\underset{\triangleq P\subscript{ers}\superscript{UB}}{\underbrace{\varepsilon-P\subscript{con}\superscript{LB}}}.
	\label{eq:upper_bound_Pers}
\end{equation}

\subsection{Upper Bound of the Block Confusion Rate}
On the other hand, an upper bound of the block confusion rate can be obtained:
\begin{theorem}\label{th:upper_bound_Pcon}
	Given fixed $(M,k,E)$, for any $\varepsilon$-bounded decoder that rejects list-output, the block confusion rate $P\subscript{con}$ is upper-bounded by
	\begin{equation}\label{eq:upper_bound_Pcon}
		P\subscript{con}\superscript{UB}\triangleq (M^k-1)P\subscript{pair}\left(\sqrt{Ed\subscript{min}\superscript{min}}\right).
	\end{equation}
\end{theorem}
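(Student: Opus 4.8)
The plan is to majorize the overall confusion rate \eqref{eq:overall_con_prob} termwise, replacing every pairwise confusion probability by its largest admissible value and then counting the competing codewords with the crudest possible estimate. Starting from
\[
P\subscript{con}=\frac{1}{M^k}\sum_{\mathbf{x}\in\mathcal{X}}\sum_{\mathbf{x}'\in\mathcal{V}_\Sigma(\mathcal{X},\mathbf{x})}P(\mathbf{x}\to\mathbf{x}'),
\]
I would first use \eqref{eq:pairwise_con_prob} together with \eqref{eq:hamming_to_euclidean} to rewrite each summand as $P\subscript{pair}\!\left(\sqrt{E\,d(\mathbf{x},\mathbf{x}')}\right)$, so that the entire quantity depends only on the Hamming distances realized by the chosen codebook.

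The central step is to replace each such term by the common value $P\subscript{pair}\!\left(\sqrt{E\,d\subscript{min}\superscript{min}}\right)$. Since $\mathcal{X}$ is an $\varepsilon$-bounded codebook that rejects list output, Lemma~\ref{lem:dmin_bounds} guarantees $d(\mathbf{x},\mathbf{x}')\geqslant d\subscript{min}(\mathcal{X})\geqslant d\subscript{min}\superscript{min}$ for every distinct pair, hence $\sqrt{E\,d(\mathbf{x},\mathbf{x}')}\geqslant\sqrt{E\,d\subscript{min}\superscript{min}}$. To invoke the monotonicity of Lemma~\ref{lem:monotonicity_Ppair} I must verify its hypothesis $D\geqslant 2R(\varepsilon)$ at this worst-case distance: this is immediate from the definition $d\subscript{min}\superscript{min}=\lceil 4R^2(\varepsilon)/E\rceil$, which yields $\sqrt{E\,d\subscript{min}\superscript{min}}\geqslant\sqrt{E\cdot 4R^2(\varepsilon)/E}=2R(\varepsilon)$. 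With the hypothesis confirmed, $P\subscript{pair}$ is decreasing on the relevant range, so $P(\mathbf{x}\to\mathbf{x}')\leqslant P\subscript{pair}\!\left(\sqrt{E\,d\subscript{min}\superscript{min}}\right)$ for every pair.

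It then remains to count. Each inner sum ranges over $\mathcal{V}_\Sigma(\mathcal{X},\mathbf{x})$, i.e. all codewords other than $\mathbf{x}$, of which there are exactly $M^k-1$ in a fully utilized codebook. Bounding the number of summands by $M^k-1$ and each summand by the common value above, every inner sum is at most $(M^k-1)P\subscript{pair}\!\left(\sqrt{E\,d\subscript{min}\superscript{min}}\right)$; as this estimate is independent of $\mathbf{x}$, averaging over the $M^k$ choices of $\mathbf{x}$ leaves it unchanged, delivering exactly \eqref{eq:upper_bound_Pcon}.

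The only genuine obstacle is confirming that the monotonicity hypothesis $D\geqslant 2R(\varepsilon)$ indeed holds at $\sqrt{E\,d\subscript{min}\superscript{min}}$; once the ceiling in the definition of $d\subscript{min}\superscript{min}$ is unwound this is routine, and everything else is a uniform termwise bound. The resulting estimate is deliberately loose -- it treats all $M^k-1$ competing codewords as if each sat at the minimum distance -- which is precisely what makes it an upper bound, mirroring in the opposite direction the single-dominant-shell lower bound of Theorem~\ref{th:lower_bound_Pcon}.
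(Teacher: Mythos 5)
Your proof is correct and takes essentially the same route as the paper's: both majorize every pairwise term $P(\mathbf{x}\to\mathbf{x}')$ by $P\subscript{pair}\left(\sqrt{Ed\subscript{min}\superscript{min}}\right)$ using the distance bound from Lemma~\ref{lem:dmin_bounds} and the $D$-monotonicity from Lemma~\ref{lem:monotonicity_Ppair}, then count the $\vert\mathcal{V}_\Sigma(\mathcal{X},\mathbf{x})\vert=M^k-1$ competitors uniformly. Your explicit verification that $\sqrt{Ed\subscript{min}\superscript{min}}\geqslant 2R(\varepsilon)$, so the monotonicity hypothesis of Lemma~\ref{lem:monotonicity_Ppair} indeed applies, is a small but welcome addition that the paper's proof leaves implicit.
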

\begin{proof}
	See Appendix~\ref{app:upper_bound_Pcon}.
\end{proof}
Again, it lower-bounds the block erasure rate with:
\begin{equation}\label{eq:lower_bound_Pers}
	P\subscript{ers}\geqslant \varepsilon-P\subscript{con}\superscript{UB}.
	% P\subscript{ers}\geqslant\underset{\triangleq P\subscript{ers}\superscript{LB}}{\underbrace{\varepsilon-(M^k-1)P\subscript{pair}\left(\sqrt{Ed\subscript{min}\superscript{min}}\right)}}.
\end{equation}

\section{Sensitivity of the Confusion Rate Bounds}\label{sec:sensitivity_analysis}
Having derived the error bounds, we are interested in their sensitivity to the system parameters. Since $P\subscript{ers}$ and $P\subscript{con}$ are one-to-one coupled under fixed $\varepsilon$, we focus here on the bounds $P\subscript{con}\superscript{LB}$ and  $P\subscript{con}\superscript{UB}$.
\subsection{Lower Bound versus Power}
First, we analyze the impact of $E$ on $P\subscript{con}\superscript{LB}$ and can derive:
\begin{theorem}
	\label{th:monotonicity_Pcon_LB_regarding_E}
	Given fixed $(M, k, n, \sigma, \varepsilon)$, the lower bound $P\subscript{con}\superscript{LB}$ is a monotonically decreasing and convex function of $E$.
\end{theorem}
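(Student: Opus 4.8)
The plan is to reduce the claim to a single-variable composition argument resting on Lemma~\ref{lem:monotonicity_Ppair}. First I would isolate the $E$-dependence of $P\subscript{con}\superscript{LB}$. Under fixed $(M,k,n,\sigma,\varepsilon)$ the decision radius $R(\varepsilon)=\sigma F^{-1}_{\chi_n}(1-\varepsilon)$ is a constant, and the bound $d\subscript{min}\superscript{max}$ from Eq.~\eqref{eq:upper_bound_dmin} depends only on $(M,n,r)$ with $r=n-k$, hence is also independent of $E$. Therefore the combinatorial prefactor
\begin{equation*}
	C\triangleq\comb{n}{d\subscript{min}\superscript{max}}\frac{(M-1)^{d\subscript{min}\superscript{max}}}{M^{n-k}}>0
\end{equation*}
is a positive constant, and the entire $E$-dependence is channelled through the scalar argument of $P\subscript{pair}$, giving $P\subscript{con}\superscript{LB}(E)=C\,P\subscript{pair}\bigl(h(E)\bigr)$ with $h(E)\triangleq\sqrt{d\subscript{min}\superscript{max}\,E}$.

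Before applying Lemma~\ref{lem:monotonicity_Ppair}, I would check that $h(E)$ lies in the regime $D\geqslant 2R(\varepsilon)$ where the lemma holds. For every $E$ for which the feasibility interval of Lemma~\ref{lem:dmin_bounds} is non-empty we have $d\subscript{min}\superscript{max}\geqslant d\subscript{min}\superscript{min}=\lceil 4R^2(\varepsilon)/E\rceil\geqslant 4R^2(\varepsilon)/E$, whence $h(E)^2=d\subscript{min}\superscript{max}E\geqslant 4R^2(\varepsilon)$ and thus $h(E)\geqslant 2R(\varepsilon)$. On this admissible range Lemma~\ref{lem:monotonicity_Ppair} certifies that $P\subscript{pair}$ is decreasing and convex in its argument. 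Monotonicity of $P\subscript{con}\superscript{LB}$ is then immediate: $h$ is strictly increasing in $E$, $P\subscript{pair}$ is decreasing, and $C>0$, so $P\subscript{con}\superscript{LB}$ decreases in $E$.

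For convexity I would appeal to the composition rule that a convex nonincreasing outer function composed with a concave inner function is again convex, noting that $h(E)=\sqrt{d\subscript{min}\superscript{max}}\,\sqrt{E}$ is concave. Made explicit through the chain rule,
\begin{equation*}
	\frac{\diff^2 P\subscript{con}\superscript{LB}}{\diff E^2}=C\left[P\subscript{pair}''(h)\bigl(h'\bigr)^2+P\subscript{pair}'(h)\,h''\right],
\end{equation*}
and I would read off the sign of each summand: $P\subscript{pair}''\geqslant 0$ by the convexity in Lemma~\ref{lem:monotonicity_Ppair} and $(h')^2\geqslant 0$, so the first term is nonnegative; while $P\subscript{pair}'\leqslant 0$ by the same lemma and $h''=-\tfrac14\sqrt{d\subscript{min}\superscript{max}}\,E^{-3/2}\leqslant 0$, so the second term is also nonnegative. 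Hence the second derivative is nonnegative and $P\subscript{con}\superscript{LB}$ is convex.

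I expect the convexity step to be the main obstacle, since it is the cooperation of two facts---convexity of $P\subscript{pair}$ and concavity of the square-root substitution---that yields the result; neither suffices alone because the inner map $h$ is genuinely nonlinear. The essential preparatory work is therefore the domain check ensuring that \emph{both} conclusions of Lemma~\ref{lem:monotonicity_Ppair} (convexity together with the decreasing behavior) are available over the whole admissible $E$-range; once that is secured, the sign analysis of the two chain-rule summands is routine.
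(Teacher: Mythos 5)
Your proposal is correct and takes essentially the same route as the paper's proof: it fixes $d\subscript{min}\superscript{max}$ as an $E$-independent constant determined by $(M,n,r)$, then applies the chain rule to $P\subscript{pair}\left(\sqrt{Ed\subscript{min}\superscript{max}}\right)$, with the first derivative negative by the monotonicity in Lemma~\ref{lem:monotonicity_Ppair} and both second-derivative terms nonnegative by combining that lemma's convexity with the concavity of the square-root substitution. Your explicit verification that $\sqrt{Ed\subscript{min}\superscript{max}}\geqslant 2R(\varepsilon)$, so that Lemma~\ref{lem:monotonicity_Ppair} actually applies, is a point the paper leaves implicit and a welcome addition of rigor.
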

\begin{proof}
	See Appendix~\ref{app:monotonicity_Pcon_LB_regarding_E}.
\end{proof}

\subsection{Lower Bound versus Blocklength}
When investigating the sensitivity of $P\subscript{con}\superscript{LB}$ regarding the blocklength $n$, we must take account that $d\subscript{min}\superscript{max}$ is a function of $n$ as defined in Eq.~\eqref{eq:upper_bound_dmin}, and shall be analyzed first.
\begin{lemma}
	\label{lem:monotonicity_dmin_max_regarding_n}
	Given fixed $(M, k)$, the maximum minimum Hamming distance $d\subscript{min}\superscript{max}$ is a monotonically increasing (discrete) function of $n$.
\end{lemma}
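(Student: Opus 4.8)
The plan is to argue directly from the defining minimization in Eq.~\eqref{eq:upper_bound_dmin}. With $(M,k)$ fixed and $r=n-k$, raising $n$ by one simultaneously inflates the binomial weights $\comb{n}{l}$ and pushes the threshold from $M^{n-k}$ up to $M^{n+1-k}$; the whole content of the lemma is that the ball volume never grows fast enough (relative to the threshold) to \emph{lower} the optimal index. Abbreviating the Hamming-ball volume as $V_n(t)\triangleq\sum_{l=0}^{t}\comb{n}{l}(M-1)^l$ and the optimal index as $t^*(n)\triangleq\min\{t\in\mathbb{N}\mid V_n(t)>M^{n-k}\}$, so that $d\subscript{min}\superscript{max}=2t^*(n)$, it suffices to show $t^*(n+1)\geqslant t^*(n)$.

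The engine of the proof is a single Pascal-rule recursion. Minimality of $t^*(n)$ gives the slack inequality $V_n(t^*(n)-1)\leqslant M^{n-k}$. Setting $t_0\triangleq t^*(n)-1$ and expanding $V_{n+1}(t_0)$ via $\comb{n+1}{l}=\comb{n}{l}+\comb{n}{l-1}$, then reindexing the shifted sum, I obtain the identity $V_{n+1}(t_0)=V_n(t_0)+(M-1)V_n(t_0-1)$. Since $V_n(\cdot)$ is nondecreasing in its argument, both terms on the right are bounded by $M^{n-k}$, whence $V_{n+1}(t_0)\leqslant M^{n-k}+(M-1)M^{n-k}=M^{n+1-k}$.

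Because $V_{n+1}(t_0)$ fails to \emph{strictly} exceed the new threshold $M^{n+1-k}$, and $V_{n+1}$ is increasing in $t$, no index $t\leqslant t_0$ can trigger the strict inequality at blocklength $n+1$; hence $t^*(n+1)>t_0=t^*(n)-1$, i.e., $t^*(n+1)\geqslant t^*(n)$, and $d\subscript{min}\superscript{max}=2t^*(n)$ inherits the same monotonicity. I do not anticipate a substantive obstacle: the recursion is exact, and the decisive point is merely that Pascal's rule couples the volume growth to precisely the factor $M$ by which the threshold scales. The only care needed is boundary bookkeeping---checking that $t^*(n)\geqslant 1$ for all admissible $n\geqslant k$ so that $t_0\geqslant 0$, and noting that at $t_0=0$ the term $V_n(-1)$ is the empty sum $0$, under which the recursion still holds.
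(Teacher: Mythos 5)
Your proposal is correct and follows essentially the same route as the paper's own proof: both hinge on the Pascal-rule identity $V_{n+1}(t)=V_n(t)+(M-1)V_n(t-1)$ and the observation that this growth is matched exactly by the factor $M$ scaling of the threshold $M^{n-k}$, so that any $t$ below the optimal index at blocklength $n$ remains below it at $n+1$. Your separate bounding of the two terms (versus the paper's single bound $V_{n+1}(t)<MV_n(t)$) and your boundary bookkeeping at $t_0=0$ are cosmetic refinements, not a different argument.
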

\begin{proof}
	See Appendix~\ref{app:monotonicity_dmin_max_regarding_n}.
\end{proof}

\begin{lemma}
	\label{lem:bound_dmin_max_regarding_n}
	$\forall n\in[k, M^k-1]$, $\exists~\lambda<\frac{2(M-1)}{M}+\frac{1}{n}$ such that $d\subscript{min}\superscript{max}(n)\leqslant\lambda n$
\end{lemma}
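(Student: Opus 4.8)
The plan is to argue directly from the definition~\eqref{eq:upper_bound_dmin}. Writing $r=n-k$, $V(n,t)\triangleq\sum_{l=0}^{t}\comb{n}{l}(M-1)^{l}$, and $t^{\ast}\triangleq\min\{t\in\mathbb{N}\mid V(n,t)>M^{n-k}\}$, we have $d\subscript{min}\superscript{max}(n)=2t^{\ast}$. Taking the tightest admissible value $\lambda\triangleq d\subscript{min}\superscript{max}(n)/n=2t^{\ast}/n$, the existence statement $\lambda<\frac{2(M-1)}{M}+\frac1n$ is equivalent to the single scalar inequality $t^{\ast}<\frac{M-1}{M}\,n+\tfrac12$. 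Thus the whole lemma reduces to locating $t^{\ast}$ relative to the threshold $\tau+\tfrac12$, where $\tau\triangleq\frac{M-1}{M}\,n$.

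The key move I would make is to read the sphere-packing sum as a binomial tail. Dividing by $M^{n}$ gives
\begin{equation}
\frac{V(n,t)}{M^{n}}=\sum_{l=0}^{t}\comb{n}{l}\Bigl(\tfrac{M-1}{M}\Bigr)^{l}\Bigl(\tfrac{1}{M}\Bigr)^{n-l}=P(X\leqslant t),
\end{equation}
where $X\sim\mathrm{Bin}\!\bigl(n,\tfrac{M-1}{M}\bigr)$ has mean $\mathbb{E}[X]=\tau$. Hence the defining condition $V(n,t)>M^{n-k}$ is exactly $P(X\leqslant t)>M^{-k}$, and $t^{\ast}$ is simply the least integer at which the binomial \ac{cdf} crosses the level $M^{-k}$. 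The target inequality therefore says no more than that this crossing happens below the mean-plus-half.

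With this reformulation I would invoke the classical mean--median relationship for the binomial law: a median of $\mathrm{Bin}(n,p)$ always lies within $\max(p,1-p)\le 1$ of the mean, and indeed the integer nearest to $np$ is a median, so $P(X\leqslant t_{0})\ge\tfrac12$ for a suitable integer $t_{0}\le\tau+\tfrac12$. Since $M\ge2$ and $k\ge1$ force $M^{-k}\le\tfrac12$, this yields $P(X\leqslant t_{0})\ge\tfrac12\ge M^{-k}$, whence $t^{\ast}\le t_{0}\le\tau+\tfrac12$ and therefore $d\subscript{min}\superscript{max}(n)=2t^{\ast}\le 2\tau+1=\frac{2(M-1)}{M}\,n+1$, which is precisely the claimed bound.

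I expect the delicate part to be upgrading these non-strict inequalities to the strict one the statement demands. Pinning down the correct rounding of $\tau$---using $t_{0}=\lfloor\tau\rfloor$ when $\mathrm{frac}(\tau)\le\tfrac12$ and $t_{0}=\lceil\tau\rceil$ otherwise, so that $t_{0}$ is genuinely the largest integer below $\tau+\tfrac12$---requires deciding which of $\lfloor\tau\rfloor,\lceil\tau\rceil$ is the binomial median. This is cleanest via the complementary variable $n-X\sim\mathrm{Bin}\!\bigl(n,\tfrac1M\bigr)$, for which $P(X\le\lfloor\tau\rfloor)=P(n-X\ge\lceil n/M\rceil)$ with $\mathbb{E}[n-X]=n/M\le n/2$, so the relevant median estimate becomes transparent. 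Strictness $P(X\le t_{0})>M^{-k}$ then holds whenever $M^{k}>2$, i.e.\ in every non-degenerate regime; the single corner $M^{k}=2$ (a two-codeword book) must be excluded or dispatched by direct computation, and the borderline $\mathrm{frac}(\tau)=\tfrac12$ handled by the same complementary-variable estimate.
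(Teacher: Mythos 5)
Your reduction is correct and genuinely different in spirit from the paper's proof (which lower-bounds the Hamming-ball volume by $\frac{1}{n+1}M^{nH_M(\delta)}$ with $\delta=\frac{d\subscript{min}\superscript{max}(n)-1}{2n}$ and then inverts the $M$-ary entropy function): the identity $\sum_{l=0}^{t}\binom{n}{l}(M-1)^l=M^nP(X\leqslant t)$ with $X\sim\mathrm{Bin}\bigl(n,\frac{M-1}{M}\bigr)$, and the equivalence of the lemma to $t^{\ast}<\tau+\frac12$ with $\tau=n\frac{M-1}{M}$, are both sound. The genuine gap is the median step: it is \emph{not} true that the integer nearest to $np$ is a median of $\mathrm{Bin}(n,p)$. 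The Kaas--Buhrman theorem only confines every median to $[\lfloor np\rfloor,\lceil np\rceil]$, and when $\mathrm{frac}(np)<\frac12$ the median can still be $\lceil np\rceil>\tau+\frac12$. This occurs inside your own parameter family whenever $n$ is small relative to $M$ (roughly $M/2<n<M\ln 2$): take $M=5$, $n=3$ (admissible, e.g.\ with $k=2$), so $p=\frac45$ and $\tau=2.4$; then $P(X\leqslant 2)=1-(4/5)^3=\frac{61}{125}<\frac12$, so \emph{no} integer $t_0\leqslant\tau+\frac12$ satisfies $P(X\leqslant t_0)\geqslant\frac12$, and your complementary-variable fix evaluates to $P(Y\geqslant\lceil n/M\rceil)=P(Y\geqslant 1)=\frac{61}{125}$ for $Y\sim\mathrm{Bin}\bigl(3,\frac15\bigr)$, failing on the very same instance. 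The chain $P(X\leqslant t_0)\geqslant\frac12\geqslant M^{-k}$ therefore cannot be completed as written.

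The plan is repairable, because you only need $P(X\leqslant\lfloor\tau\rfloor)>M^{-k}$, which is far weaker than $\geqslant\frac12$: in the failing window one has $\lfloor\tau\rfloor=n-1$ and $P(X\leqslant n-1)=1-\bigl(1-\frac1M\bigr)^n>1-e^{-1/2}\approx 0.39$, which exceeds $M^{-k}$ whenever $M^k\geqslant 3$; the corner $M^k=2$ forces $n=1$, where the stated strict bound is actually tight ($d\subscript{min}\superscript{max}(1)=2=\bigl(\frac{2(M-1)}{M}+\frac1n\bigr)n$), so your instinct to exclude it was right. But note that once you start proving bespoke tail lower bounds regime by regime, the clean mean--median argument has been abandoned. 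The paper's entropy route sidesteps the median delicacy entirely --- its $\frac{1}{n+1}$ prefactor absorbs precisely the slack that your domain restriction $n\leqslant M^k-1$ (i.e.\ $M^{-k}\leqslant\frac{1}{n+1}$) provides --- and, more importantly, it yields the explicit value $\lambda=2H_M^{-1}\bigl[1-\frac{k-\log_M(n+1)}{n}\bigr]+\frac1n$, which is what Lemma~\ref{lem:bound_lambda} subsequently sharpens to $\lambda<\frac1M$; your ceiling $\frac{2(M-1)}{M}+\frac1n\geqslant 1$ can never feed that downstream step, so even a repaired median argument proves this lemma but not the stronger handle the paper's proof actually supplies.
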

\begin{proof}
	See Appendix~\ref{app:bound_dmin_max_regarding_n}.
\end{proof}

\begin{lemma}\label{lem:bound_lambda}
	As long as $k>\frac{2M}{\log_M(2M^2)}-1$, the bound $\lambda$ in Lemma~\ref{lem:bound_dmin_max_regarding_n} always fulfill $\lambda<\frac{1}{M}$.
\end{lemma}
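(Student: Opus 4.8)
The plan is to recast the inequality $\lambda<1/M$ as a statement about the cumulative Hamming-ball volume $V(t)\triangleq\sum_{l=0}^{t}\comb{n}{l}(M-1)^{l}$ that defines $d\subscript{min}\superscript{max}$ through \eqref{eq:upper_bound_dmin}. Since Lemma~\ref{lem:bound_dmin_max_regarding_n} takes $\lambda$ to be the smallest constant with $d\subscript{min}\superscript{max}(n)\leqslant\lambda n$, proving $\lambda<1/M$ is the same as proving $d\subscript{min}\superscript{max}(n)<n/M$. Writing $d\subscript{min}\superscript{max}=2t\superscript{*}$ with $t\superscript{*}=\min\{t:V(t)>M^{n-k}\}$, it then suffices to exhibit a single radius $t_0$ satisfying $2t_0<n/M$ at which $V(t_0)>M^{n-k}$ already holds: minimality of $t\superscript{*}$ forces $t\superscript{*}\leqslant t_0$, hence $d\subscript{min}\superscript{max}\leqslant 2t_0<n/M$.

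I would take $t_0=\lfloor\frac{(M-1)n}{2M^2}\rfloor$, so that $2t_0\leqslant\frac{(M-1)n}{M^2}=\frac nM-\frac{n}{M^2}<\frac nM$ and the strict bound $\lambda<1/M$ is hard-wired into the construction, independent of the behaviour of $V$. What remains is the single volume inequality $V(t_0)>M^{n-k}$. For this I lower-bound the sum by its largest term and use $\comb{n}{t_0}\geqslant(n/t_0)^{t_0}$, giving $V(t_0)\geqslant\big((M-1)n/t_0\big)^{t_0}$; the choice of $t_0$ makes the base equal to $2M^2$, precisely the constant appearing in the statement, so that $V(t_0)\gtrsim(2M^2)^{t_0}$.

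Taking $\log_M$ of the target $(2M^2)^{t_0}>M^{n-k}$ reduces it to the linear-in-exponent inequality $t_0\log_M(2M^2)>n-k$. Substituting $t_0$, bounding the admissible blocklength through the range $n\in[k,M^k-1]$ supplied by Lemma~\ref{lem:bound_dmin_max_regarding_n}, and invoking the monotonicity of $d\subscript{min}\superscript{max}$ in $n$ from Lemma~\ref{lem:monotonicity_dmin_max_regarding_n} to locate the binding blocklength, the requirement collapses to $(k+1)\log_M(2M^2)>2M$, which is exactly the hypothesis $k>\frac{2M}{\log_M(2M^2)}-1$.

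The main obstacle is the blocklength-uniformity of the middle step: the floor in $t_0$ and the single-term estimate of $V$ both inject $n$-dependent slack, and the delicate point is to show $t_0\log_M(2M^2)>n-k$ across the whole range so that the final criterion depends only on $(M,k)$ and not on $n$. This is where I expect the real work to lie — sharpening $\comb{n}{t_0}$ by a Stirling estimate to recover the exact base $2M^2$ rather than the looser $2M(M-1)$, and verifying that the worst case of the reduced inequality lands at the endpoint that produces the clean $2M$ on the right-hand side. Everything downstream of the threshold is elementary rearrangement.
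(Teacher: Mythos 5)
There is a genuine gap, and it sits exactly where you parked it: the ``blocklength-uniformity of the middle step'' is not a matter of sharpening Stirling estimates --- the deferred inequality is false over most of the admissible range, and so is the statement you reduced to. With $t_0=\lfloor (M-1)n/(2M^2)\rfloor$, your largest-term bound gives $V(t_0)\approx(2M^2)^{t_0}=M^{c_M n}$ with $c_M=\frac{(M-1)\log_M(2M^2)}{2M^2}\leqslant 3/8<1$ for every $M\geqslant 2$, while the target exponent $n-k$ approaches $n$ as $n$ grows with $k$ fixed; hence $t_0\log_M(2M^2)>n-k$ fails once $n\gtrsim k/(1-c_M)$, and no refinement of $\binom{n}{t_0}$ can help, since the exact ball volume is $M^{nH_M(t_0/n)+o(n)}$ with $H_M\left(\frac{M-1}{2M^2}\right)<1$ fixed. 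Worse, the equivalent form $d\subscript{min}\superscript{max}(n)<n/M$ that you correctly extracted fails outright at the large-$n$ end of $[k,M^k-1]$: take $M=2$, $k=4$, $n=15$, so $M^{n-k}=2048$; then $\sum_{l=0}^{4}\binom{15}{l}=1941\leqslant 2048<\sum_{l=0}^{5}\binom{15}{l}=4944$, so Eq.~\eqref{eq:upper_bound_dmin} gives $d\subscript{min}\superscript{max}=10>15/2$, even though $k=4$ comfortably satisfies the hypothesis $k>1/3$. So no choice of $t_0$ can close the middle step. Note also that Lemma~\ref{lem:monotonicity_dmin_max_regarding_n} gives monotonicity of $d\subscript{min}\superscript{max}$, not of $d\subscript{min}\superscript{max}/n$, so it cannot ``locate the binding blocklength'' at the small-$n$ end; for your reduced inequality the binding case is the \emph{largest} $n$, and your collapse to $(k+1)\log_M(2M^2)>2M$ in fact certifies only the minimal-redundancy endpoint $n=k+1$.

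For comparison, the paper argues through the entropy parametrization behind Lemma~\ref{lem:bound_dmin_max_regarding_n}: it shows $H_M\left(\frac{1}{2M}\right)>\frac{\log_M(2M^2)}{2M}>\frac{1}{k+1}$ under the stated hypothesis and concludes $\lambda=2H_M^{-1}\left(\frac{1}{k+1}\right)<\frac{1}{M}$. Setting the argument of $H_M^{-1}$ to $\frac{1}{k+1}=1-\frac{k}{k+1}$ amounts to evaluating the bound of Lemma~\ref{lem:bound_dmin_max_regarding_n} at $n=k+1$ (with the positive $\log_M(n+1)/n$ correction dropped); your largest-term binomial estimate is the discrete counterpart of exactly that entropy computation, which is why the constant $2M^2$ surfaces in both. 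In substance, then, your route coincides with the paper's at $n=k+1$, and the uniformity in $n$ that you flagged as the remaining work --- and that the paper's one-line proof passes over silently --- is precisely where both arguments cease to be valid: the bound $\lambda<1/M$ is safe only in the high-rate regime where $n$ stays within a constant factor of $k$, not across all of $[k,M^k-1]$.
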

\begin{proof}
	See Appendix~\ref{app:bound_lambda}.
\end{proof}

\begin{theorem}
	\label{th:monotonicity_PconLB_regarding_n}
	Given fixed $(M, k, E, \sigma, \varepsilon)$, the lower bound $P\subscript{con}\superscript{LB}$ is piecewise monotonic that decreases with increasing $n$ in every individual $d\subscript{min}\superscript{max}$-consistent interval as long as $k>\frac{2M}{\log_M(2M^2)}-1$. For large enough $n$, it is monotonic.
\end{theorem}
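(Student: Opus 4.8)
The plan is to control the multiplicative increment $P\subscript{con}\superscript{LB}(n+1)/P\subscript{con}\superscript{LB}(n)$ and to show that it never exceeds $1$. First I fix a maximal block of consecutive blocklengths on which $d\subscript{min}\superscript{max}$ equals a constant $d$, i.e. a $d\subscript{min}\superscript{max}$-consistent interval. On such an interval the factors $(M-1)^{d}$ and $P\subscript{pair}(\sqrt{Ed})$ in Eq.~\eqref{eq:lower_bound_Pcon} are frozen, so that for $n,n+1$ in the interval
\[
\frac{P\subscript{con}\superscript{LB}(n+1)}{P\subscript{con}\superscript{LB}(n)}=\frac{1}{M}\frac{\comb{n+1}{d}}{\comb{n}{d}}=\frac{n+1}{M(n+1-d)},
\]
which is strictly below $1$ exactly when $n>\tfrac{Md}{M-1}-1$. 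By Lemmas~\ref{lem:bound_dmin_max_regarding_n} and~\ref{lem:bound_lambda}, the hypothesis $k>\tfrac{2M}{\log_M(2M^2)}-1$ forces $d=d\subscript{min}\superscript{max}(n)\leqslant\lambda n<n/M$, i.e. $Md<n$; hence $\tfrac{Md}{M-1}-1<\tfrac{n}{M-1}-1<n$ for every $M\geqslant2$, and the increment is $<1$ throughout the interval. This establishes the claimed piecewise decrease, and is the clean, fully rigorous part of the argument.

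For the global statement I would pass to the smooth envelope, regarding $d\subscript{min}\superscript{max}(n)\approx\lambda n$ (with $\lambda<1/M$ by Lemma~\ref{lem:bound_lambda}) as a differentiable function and studying $g(n):=\log P\subscript{con}\superscript{LB}(n)$. Expanding $\log\comb{n}{d}$ by Stirling and collecting the explicit and the $d$-mediated $n$-dependence gives
\[
\frac{\diff g}{\diff n}\approx A(\lambda)+\lambda\,\frac{\partial\log P\subscript{pair}(\sqrt{Ed})}{\partial d},\quad A(\lambda)=-\log(1-\lambda)+\lambda\log\tfrac{(1-\lambda)(M-1)}{\lambda}-\log M.
\]
A one-line computation yields $A'(\lambda)=\log\tfrac{(1-\lambda)(M-1)}{\lambda}>0$ on $(0,1/M)$, while $A(1/M)=\tfrac{2-M}{M}\log(M-1)\leqslant0$, so $A(\lambda)<0$ for every $\lambda<1/M$. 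Since $P\subscript{pair}$ is decreasing in $D$ by Lemma~\ref{lem:monotonicity_Ppair}, the second term is nonpositive, whence $\diff g/\diff n<0$ and $P\subscript{con}\superscript{LB}$ is eventually monotonically decreasing.

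The hard part is to make this second step rigorous at the integer jumps of $d\subscript{min}\superscript{max}(n)$, where piecewise monotonicity alone is silent. By Lemma~\ref{lem:monotonicity_dmin_max_regarding_n} the sole inter-interval steps are upward jumps $d\mapsto d+2$ (or larger, since $d\subscript{min}\superscript{max}$ is even), and at such a transition the increment picks up the factor
\[
\frac{(M-1)^{2}}{M}\,\frac{(n+1)(n-d)}{(d+1)(d+2)},
\]
which for $d\approx\lambda n$ is of order $(M-1)^{2}/(M\lambda^{2})\gg1$; the decrease can then come only from the simultaneous drop of $P\subscript{pair}$. I would therefore quantify this drop using the convexity of $P\subscript{pair}$ (Lemma~\ref{lem:monotonicity_Ppair}) together with the concentration of $\Vert\mathbf{w}\Vert/\sigma\sim\chi_n$ about $\sqrt{n}$, which renders $P\subscript{pair}(\sqrt{Ed})$ exponentially small in $n$, and show that for $n$ large enough this decay overcomes the combinatorial growth at every jump. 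Carrying out this estimate while tracking the full $n$-dependence of $P\subscript{pair}$ — through $\chi_n$, the erasure radius $R(\varepsilon)$, and the cap fraction $\Omega_n$ — is the delicate computation underlying the continuous bound, and is the crux of the argument.
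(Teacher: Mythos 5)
Your treatment of the in-interval case reproduces the paper's argument, but it rests on a misstatement: within a $d\subscript{min}\superscript{max}$-consistent interval the factor $P\subscript{pair}\left(\sqrt{Ed}\right)$ is \emph{not} frozen, because $P\subscript{pair}$ depends on $n$ beyond its argument $D$ --- through the $\chi_n$ density, the cap fraction $\Omega_n$, and the radius $R(\varepsilon)=\sigma F^{-1}_{\chi_n}(1-\varepsilon)$. The correct ratio, as in the paper's Eq.~\eqref{eq:ratio_PconLB_n+1_n}, is
\[
\frac{P\subscript{con}\superscript{LB}(n+1)}{P\subscript{con}\superscript{LB}(n)}
=\frac{n+1}{M\,(n+1-d)}\cdot\frac{P\subscript{pair}^{(n+1)}\left(\sqrt{Ed}\right)}{P\subscript{pair}^{(n)}\left(\sqrt{Ed}\right)},
\]
and the paper closes this by observing that $P\subscript{pair}^{(n)}(D)$ decreases in $n$, so the second factor lies in $(0,1)$. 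Your conclusion survives only because the factor you omitted happens to help; your handling of the combinatorial factor via Lemmas~\ref{lem:bound_dmin_max_regarding_n} and~\ref{lem:bound_lambda} ($d\leqslant\lambda n<n/M$, hence $n>\frac{Md}{M-1}-1$) is exactly the paper's, so the piecewise claim stands modulo this repair. The Stirling computation of $A(\lambda)$ is a reasonable sanity check (it amounts to $A(\lambda)=\ln M\,[H_M(\lambda)-1]<0$ for $\lambda<1/M$), but the paper does not need it, and --- consistently with the freezing error above --- your envelope derivative again omits the direct dimension dependence $\partial_n\log P\subscript{pair}^{(n)}$ at fixed $D$.

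The genuine gap is the second claim of the theorem. You correctly identify that at a jump $d\mapsto d+2$ the increment picks up the combinatorial factor $\frac{(M-1)^2}{M}\cdot\frac{(n+1)(n-d)}{(d+1)(d+2)}$, and you correctly name the mechanism that must defeat it (exponential smallness of $P\subscript{pair}$ in $n$), but you explicitly defer the decisive estimate as ``the crux of the argument'' rather than proving it --- so as submitted your proposal establishes only piecewise monotonicity, not eventual monotonicity. The paper settles this crux in essentially one line: for $D\geqslant 2R$ one always has $\theta_w(D,R)<\pi/2$, hence $\sin\theta_w<1$ and the cap fraction $\Omega_n(\theta_w)$, and with it $P\subscript{pair}^{(n)}$, decays exponentially in $n$; therefore the jump-ratio is a product of a factor in $\mathcal{O}(n^\rho)$ with $0<\rho\leqslant 2$ (for $d\approx\lambda n$ it is in fact order-constant, matching your $(M-1)^2(1-\lambda)/(M\lambda^2)$) and a factor in $\mathcal{O}(e^{-\alpha n})$ with $\alpha>0$, which falls below $1$ for all sufficiently large $n$. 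The heavy machinery you anticipate (convexity of $P\subscript{pair}$, $\chi_n$ concentration about $\sqrt{n}$, tracking $R(\varepsilon)$) is unnecessary once this geometric observation about $\theta_w$ is made; without it, or some completed substitute, your proof of the ``for large enough $n$'' statement is missing.
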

\begin{proof}
	See Appendix~\ref{app:monotonicity_PconLB_regarding_n}.
\end{proof}

\subsection{Upper Bound versus Power}
First we consider the behavior of $d\subscript{min}\superscript{min}$ , which strongly impacts the block confusion rate's upper bound $P\subscript{con}\superscript{max}$:
\begin{lemma}
	For any fixed $(\sigma, n, \varepsilon)$, $d\subscript{min}\superscript{min}$ is wide-sense monotonically decreasing w.r.t. $E$.
\end{lemma}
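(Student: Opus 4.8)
The plan is to read off the closed form of $d\subscript{min}\superscript{min}$ from Eq.~\eqref{eq:lower_bound_dmin}, namely $d\subscript{min}\superscript{min}=\left\lceil 4R^2(\varepsilon)/E\right\rceil$, and to exploit the fact that, under the stated fixed parameters, the decision radius $R(\varepsilon)$ is genuinely constant in $E$. Concretely, the first step is to invoke Eq.~\eqref{eq:decision_region_radius}, which gives $R(\varepsilon)=\sigma F^{-1}_{\chi_n}(1-\varepsilon)$. This expression depends only on $(\sigma,n,\varepsilon)$ and carries no dependence on the codeword energy $E$; hence, with $(\sigma,n,\varepsilon)$ held fixed, the numerator $4R^2(\varepsilon)$ is a fixed nonnegative constant. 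This observation is the crux of the whole argument, and once it is made the conclusion is essentially immediate.

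The second step is purely a composition-of-monotone-maps argument. On the domain $E>0$, the map $E\mapsto 4R^2(\varepsilon)/E$ is strictly decreasing, since the constant $4R^2(\varepsilon)$ is nonnegative. Composing a (strictly) decreasing function with the ceiling function $\lceil\cdot\rceil$, which is wide-sense nondecreasing, yields a wide-sense nonincreasing function of $E$. This delivers precisely the claimed wide-sense monotonicity of $d\subscript{min}\superscript{min}$ with respect to $E$.

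I would then remark on why the ``wide-sense'' qualifier is both necessary and natural: the staircase structure of the ceiling means that $d\subscript{min}\superscript{min}$ is locally flat. On each interval of $E$ over which $4R^2(\varepsilon)/E$ remains inside a single unit band $(m-1,m]$, the value $d\subscript{min}\superscript{min}$ stays constant at $m$, and it drops by at least one every time $E$ crosses a threshold of the form $E_m=4R^2(\varepsilon)/m$ for $m\in\mathbb{N}$. Because $d\subscript{min}\superscript{min}$ is integer-valued and bounded below by $0$, the flat pieces are unavoidable, so strict monotonicity cannot hold and wide-sense decrease is the sharp statement.

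Honestly, I do not expect a genuine obstacle here: the only point requiring care is verifying the $E$-independence of $R(\varepsilon)$ via Eq.~\eqref{eq:decision_region_radius}, after which the result follows from elementary properties of reciprocals and the ceiling function. The lemma is best viewed as a bookkeeping preliminary that isolates this monotonicity for later use in the sensitivity analysis of $P\subscript{con}\superscript{UB}$, rather than as a result demanding a substantive proof.
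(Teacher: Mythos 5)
Your proposal is correct and matches the paper's proof, which likewise substitutes $R(\varepsilon)=\sigma F^{-1}_{\chi_n}(1-\varepsilon)$ into Eq.~\eqref{eq:lower_bound_dmin}, observes that this quantity is constant for fixed $(\sigma,n,\varepsilon)$, and concludes the wide-sense monotonicity directly. Your write-up merely makes explicit the composition-with-ceiling step and the staircase structure that the paper dismisses as ``obvious.''
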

\begin{proof}
As defined by Eq.~\eqref{eq:lower_bound_dmin}:
\begin{equation}
	d\subscript{min}\superscript{min}=\left\lceil\frac{4R^2(\varepsilon)}{E}\right\rceil=\left\lceil\frac{4\sigma^2[F^{-1}_{\chi_n}(1-\varepsilon)]^2}{E}\right\rceil.
\end{equation}	
Since $F^{-1}_{\chi_n}(1-\varepsilon)$ is a constant for fixed $(\sigma, n, \varepsilon)$, the wide-sense monotonicity is obvious.
\end{proof}

Based on this, we can further derive the following features of the upper bound $P\subscript{con}\superscript{UB}$ of block confusion rate:
\begin{theorem}
	\label{th:Pcon_UB_regarding_E}
	Given fixed $(M, k, d\subscript{min}\superscript{min})$, $P\subscript{con}\superscript{UB}$ is piecewise continuous regarding $E$, with jump discontinuities at $E_i=4R^2(\varepsilon)/i$ where $i\in\mathbb{N}^+$, and decreases monotonically in each individual continuous interval. 
\end{theorem}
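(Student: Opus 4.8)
The plan is to exploit the explicit form in Eq.~\eqref{eq:upper_bound_Pcon}, namely $P\subscript{con}\superscript{UB}=(M^k-1)P\subscript{pair}(\sqrt{E\,d\subscript{min}\superscript{min}})$, together with the step-function behaviour of $d\subscript{min}\superscript{min}(E)=\lceil 4R^2(\varepsilon)/E\rceil$ from Eq.~\eqref{eq:lower_bound_dmin}. Since $R(\varepsilon)$ is a constant under the fixed parameters, the only source of non-smoothness in $E$ is the ceiling operator. I would therefore first pin down the level sets of $d\subscript{min}\superscript{min}(E)$, then treat $P\subscript{con}\superscript{UB}$ as an ordinary continuous function of $E$ on each level set, and finally handle the transitions between level sets separately.

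First I would show that $d\subscript{min}\superscript{min}(E)=i$ exactly on the half-open interval $I_i\triangleq[\,4R^2(\varepsilon)/i,\;4R^2(\varepsilon)/(i-1)\,)=[E_i,E_{i-1})$ for each $i\in\mathbb{N}^+$ (with the convention $E_0=+\infty$), which follows directly from $i-1<4R^2(\varepsilon)/E\leqslant i$. Consequently the candidate breakpoints are precisely the $E_i=4R^2(\varepsilon)/i$, and $d\subscript{min}\superscript{min}$ is right-continuous there, dropping by one as $E$ crosses $E_i$ upward. On each $I_i$ the integer $d\subscript{min}\superscript{min}=i$ is frozen, so $P\subscript{con}\superscript{UB}(E)=(M^k-1)P\subscript{pair}(\sqrt{Ei})$ is the composition of the continuous map $E\mapsto\sqrt{Ei}$ with the continuous $P\subscript{pair}$, which already gives continuity on the interior of every interval. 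Moreover $E\geqslant E_i=4R^2(\varepsilon)/i$ forces $\sqrt{Ei}\geqslant 2R(\varepsilon)$, so the argument stays in the regime where Lemma~\ref{lem:monotonicity_Ppair} guarantees $P\subscript{pair}$ is decreasing; since $\sqrt{Ei}$ is increasing in $E$, the composition $P\subscript{con}\superscript{UB}$ is monotonically decreasing on each $I_i$, as claimed.

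It remains to confirm that the breakpoints $E_i$ are genuine jump discontinuities. I would compare the one-sided behaviour at $E_i$: the right value uses $d\subscript{min}\superscript{min}=i$ and hence argument $\sqrt{E_i\,i}=\sqrt{4R^2(\varepsilon)}=2R(\varepsilon)$, whereas the left limit is taken over $I_{i+1}$ where $d\subscript{min}\superscript{min}=i+1$, giving argument $\sqrt{E_i(i+1)}=2R(\varepsilon)\sqrt{(i+1)/i}$. Because $(i+1)/i>1$, the left argument strictly exceeds $2R(\varepsilon)$, and strict monotonicity of $P\subscript{pair}$ on $[2R(\varepsilon),\infty)$ from Lemma~\ref{lem:monotonicity_Ppair} yields $P\subscript{pair}(2R(\varepsilon)\sqrt{(i+1)/i})<P\subscript{pair}(2R(\varepsilon))$. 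Thus the left limit lies strictly below the right value, so $P\subscript{con}\superscript{UB}$ jumps upward by the nonzero amount $(M^k-1)[P\subscript{pair}(2R(\varepsilon))-P\subscript{pair}(2R(\varepsilon)\sqrt{(i+1)/i})]$ at each $E_i$. Combining the interior monotonicity with these jumps establishes that $P\subscript{con}\superscript{UB}$ is piecewise continuous, decreasing within each continuous piece, with jump discontinuities exactly at the $E_i$.

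The main obstacle I anticipate is purely bookkeeping rather than conceptual: correctly resolving the one-sided values of the ceiling function at the breakpoints, and verifying that every interval begins at argument exactly $2R(\varepsilon)$, so that the domain hypothesis $D\geqslant 2R(\varepsilon)$ of Lemma~\ref{lem:monotonicity_Ppair} is met at the left endpoints and the jumps are genuinely nonzero. Everything else reduces to a single application of the monotonicity established in that lemma.
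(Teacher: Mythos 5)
Your proposal is correct and follows essentially the same route as the paper: both reduce everything to the explicit form $P\subscript{con}\superscript{UB}=(M^k-1)P\subscript{pair}(\sqrt{E\,d\subscript{min}\superscript{min}})$, locate the breakpoints via the level sets of the ceiling in Eq.~\eqref{eq:lower_bound_dmin}, and invoke Lemma~\ref{lem:monotonicity_Ppair} for the decrease on each interval (you argue by monotone composition where the paper differentiates, which is immaterial). Your explicit one-sided comparison at $E_i$, showing a strictly positive upward jump, is in fact slightly more complete than the paper's proof, which dismisses the jump discontinuities as trivial and defers that computation to Corollary~\ref{co:Pcon_UB_regarding_E_extremes}.
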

\begin{proof}
	See Appendix~\ref{app:Pcon_UB_regarding_E}.
\end{proof}
\begin{corollary}
	\label{co:Pcon_UB_regarding_E_extremes}
	Given fixed $(M, k, d\subscript{min}\superscript{min})$, $P\subscript{con}\superscript{UB}$ has its local maximums regarding $E$ at points $E_i=4R^2(\varepsilon)/i$ where $i\in\mathbb{N}^+$, and its local minimums at $E_i^-=\lim\limits_{\delta\to 0}4R^2(\varepsilon)/i-\vert\delta\vert$. All local maximums are equal to each other, and the local minimums are monotonically decreasing w.r.t. $i$.
\end{corollary}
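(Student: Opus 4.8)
The plan is to build directly on Theorem~\ref{th:Pcon_UB_regarding_E}, which already tells us that $P\subscript{con}\superscript{UB}$ is piecewise continuous in $E$, strictly decreasing on each continuity interval, and jumps precisely at the points $E_i=4R^2(\varepsilon)/i$. On such a sawtooth-type profile the only candidate extrema are the interval endpoints, so I would first pin down where the step function $d\subscript{min}\superscript{min}(E)=\lceil 4R^2(\varepsilon)/E\rceil$ takes each of its constant values. Since this ceiling is nonincreasing and drops by exactly one each time $E$ crosses an $E_i$ from below, it equals the constant $i$ on the interval $[E_i,E_{i-1})$; combined with the strict decrease of $P\subscript{con}\superscript{UB}$ there, this places the local maximum at the left endpoint $E_i$ and the (limiting) local minimum at the left-limit of the next jump. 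Tracking the convention of the statement, the minimum sitting just to the left of the peak at $E_i$ occurs at $E_i^-$ and belongs to the interval where $d\subscript{min}\superscript{min}=i+1$.

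Next I would evaluate the two families of extrema explicitly by feeding the correct constant distance into $P\subscript{pair}$. At the left endpoint $E_i$ one has $d\subscript{min}\superscript{min}=i$, so the Euclidean distance is $\sqrt{E_i\,d\subscript{min}\superscript{min}}=\sqrt{(4R^2(\varepsilon)/i)\cdot i}=2R(\varepsilon)$, which does not depend on $i$. Hence every local maximum equals $(M^k-1)P\subscript{pair}(2R(\varepsilon))$, which establishes that all local maxima coincide. At the left-limit $E_i^-$ the ceiling still carries the larger value $d\subscript{min}\superscript{min}=i+1$, so the distance tends to $\sqrt{(4R^2(\varepsilon)/i)(i+1)}=2R(\varepsilon)\sqrt{(i+1)/i}$, giving the local-minimum value $(M^k-1)P\subscript{pair}\!\bigl(2R(\varepsilon)\sqrt{(i+1)/i}\bigr)$.

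The monotone behaviour of the minima in $i$ then reduces to composing two monotonicities. The argument $2R(\varepsilon)\sqrt{(i+1)/i}=2R(\varepsilon)\sqrt{1+1/i}$ is monotone in $i$ and always exceeds $2R(\varepsilon)$, so Lemma~\ref{lem:monotonicity_Ppair} applies and guarantees that $P\subscript{pair}$ is strictly decreasing on the relevant range. Chaining the monotone map $i\mapsto 2R(\varepsilon)\sqrt{1+1/i}$ with the decreasing map $P\subscript{pair}$ forces the sequence of local minima to be monotone in $i$, with the direction fixed by these two signs.

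The step I expect to be the main obstacle is not analytic but a matter of careful \emph{bookkeeping}: one must determine, separately for the peak points $E_i$ and the trough points $E_i^-$, exactly which constant the ceiling $d\subscript{min}\superscript{min}$ assumes there, since it differs by one between the two and the whole result hinges on which distance is passed to $P\subscript{pair}$. Additional care is needed at the top of the range, where the unbounded interval $d\subscript{min}\superscript{min}=1$ (i.e.\ $E\geqslant 4R^2(\varepsilon)$) has no jump on its right and therefore contributes no interior local minimum, so the family of minima must be indexed accordingly. Once the interval membership is fixed, both the coincidence of the maxima and the monotone trend of the minima follow immediately from the composition argument above.
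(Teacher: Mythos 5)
Your proposal is correct in substance and follows essentially the same route as the paper's own proof: the same bookkeeping of the ceiling $d\subscript{min}\superscript{min}(E)=\lceil 4R^2(\varepsilon)/E\rceil$ on the intervals between jump points, the same observation that $E_i\, d\subscript{min}\superscript{min}(E_i)=4R^2(\varepsilon)$ for every $i$ so that all peaks equal $(M^k-1)P\subscript{pair}\left(2R(\varepsilon)\right)$, the same trough values $(M^k-1)P\subscript{pair}\left(2R(\varepsilon)\sqrt{(i+1)/i}\right)$, and the same appeal to the $D$-monotonicity of $P\subscript{pair}$ from Lemma~\ref{lem:monotonicity_Ppair}. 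Your extra remark that the rightmost interval $E\geqslant 4R^2(\varepsilon)$ contributes no interior trough is a careful touch the paper omits. The one place you stop short is where you say the direction of the minima's monotonicity is ``fixed by these two signs'' without stating it: carried out, your composition gives that the argument $2R(\varepsilon)\sqrt{1+1/i}$ \emph{decreases} in $i$ while $P\subscript{pair}$ decreases in $D$, so the trough values \emph{increase} with $i$ (rising toward the common peak value as $i\to\infty$). This is in fact exactly the inequality the paper's own proof displays, namely $P\subscript{con}\superscript{UB}(E_i^-)<P\subscript{con}\superscript{UB}(E_{i+1}^-)$, so your argument agrees with the paper's derivation; the corollary's literal wording ``monotonically decreasing w.r.t.\ $i$'' is consistent with that derivation only if read as the troughs decreasing as one moves to larger $E$ (i.e., smaller $i$). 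You should state the direction explicitly rather than leave it implicit, since as written the hedge conceals the very sign that reconciles (or exposes) the statement--proof mismatch.
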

\begin{proof}
	See Appendix~\ref{app:Pcon_UB_regarding_E_extremes}.
\end{proof}

\begin{figure}[!htpb]
	\centering
	\begin{subfigure}{\linewidth}
		\centering
		\includegraphics[width=.95\linewidth, trim=5 15 5 5, clip]{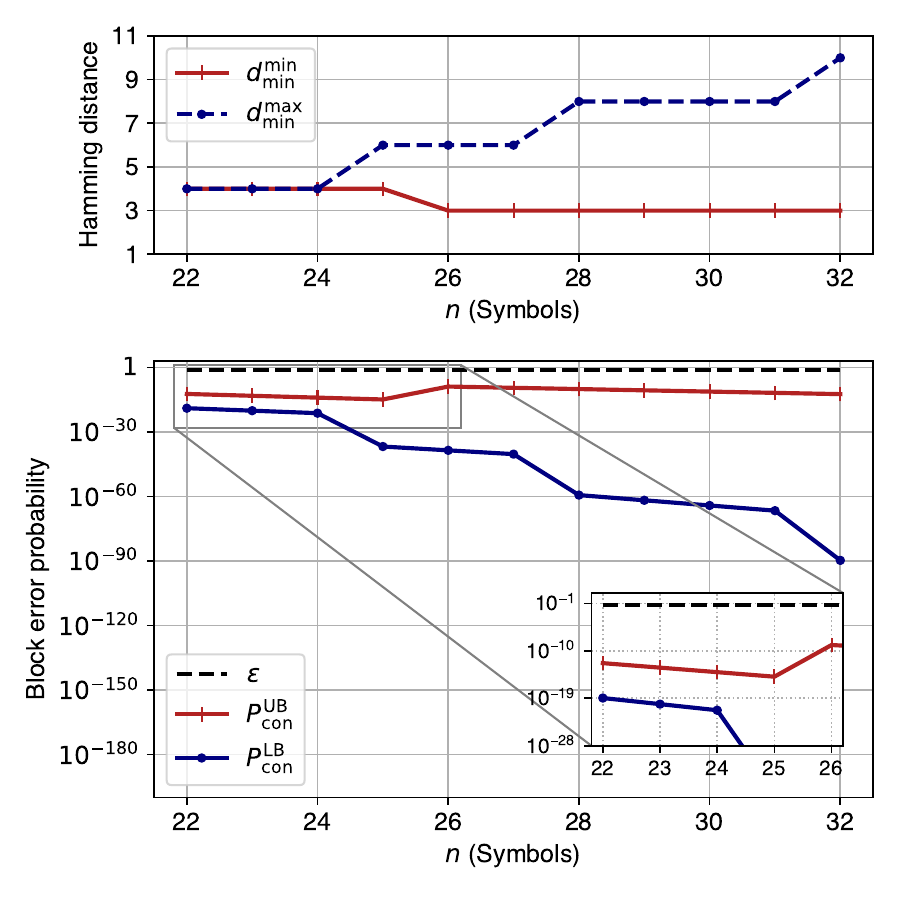}
		\subcaption{$k=16$}
		\label{subfig:p2_bounds_regarding_n_s1}
	\end{subfigure}\\
	\begin{subfigure}{\linewidth}
		\centering
		\includegraphics[width=.95\linewidth, trim=5 15 5 5, clip]{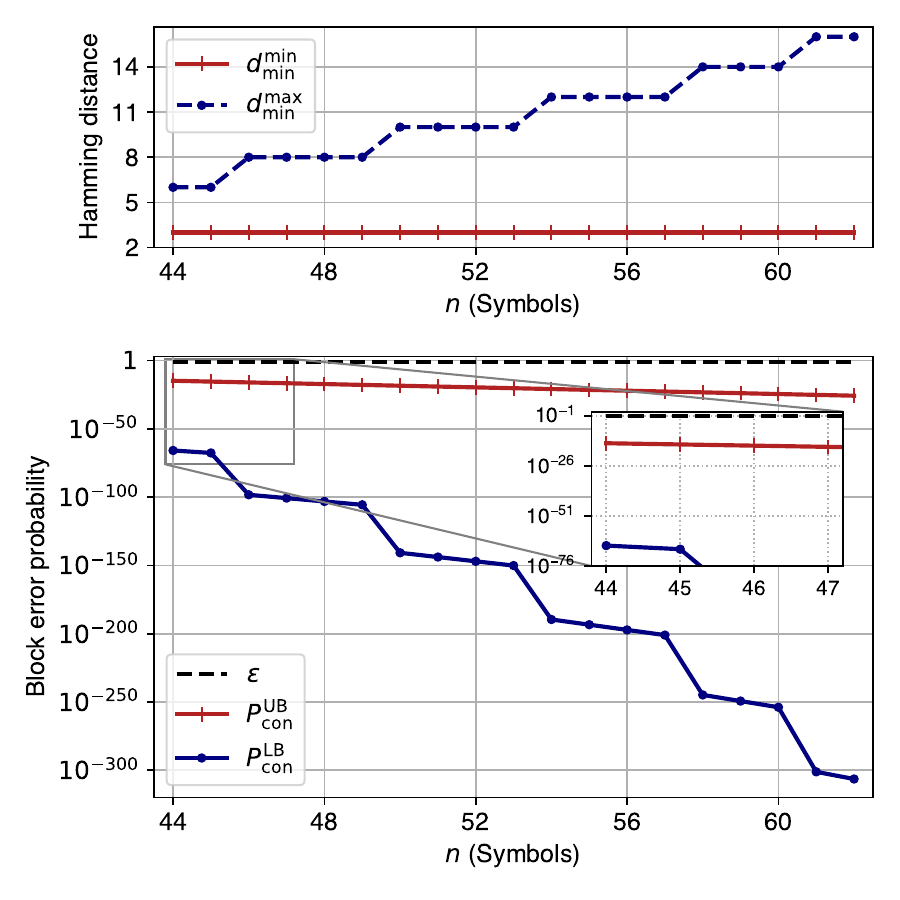}
		\subcaption{$k=32$}
		\label{subfig:p2_bounds_regarding_n_s2}
	\end{subfigure}
	\caption{Block confusion rate bounds vs. $n$ with $E\subscript{b}/N_0=\SI{0}{\dB}$.}
	\label{fig:p2_bounds_regarding_n}
\end{figure}

\subsection{Upper Bound versus Blocklength}
 On the other hand, since $F^{-1}_{\chi_n}(1-\varepsilon)$ strictly decreases with increasing $n$, we know that for any fixed $(\sigma, E, \varepsilon)$, $d\subscript{min}\superscript{min}$ is \emph{wide-sense monotonically decreasing} w.r.t. $n$ as well, which ensures with Eq.~\eqref{eq:upper_bound_Pcon} that
\begin{theorem}\label{th:Pcon_UB_regarding_n}
	Given fixed $(M, k, E, \sigma, \varepsilon)$, $P\subscript{con}\superscript{UB}$ is a piecewise monotonic (discrete) function decreasing in each individual $d\subscript{min}\superscript{min}$-consistent interval.
\end{theorem}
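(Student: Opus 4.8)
The plan is to prove the statement one $d\subscript{min}\superscript{min}$-consistent interval at a time; the ``piecewise'' qualifier means nothing has to be shown about the jumps between intervals. On a fixed interval $d\subscript{min}\superscript{min}$ is constant by definition, so the neighbour distance $D=\sqrt{Ed\subscript{min}\superscript{min}}$ entering Eq.~\eqref{eq:upper_bound_Pcon} is constant in $n$, and the prefactor $M^k-1$ is fixed by $(M,k)$. Hence the claim collapses to a single statement: the pairwise confusion probability $P\subscript{pair}(D)$ of Eq.~\eqref{eq:con_prob_upon_distance}, evaluated at this frozen $D$, is wide-sense decreasing in $n$. Note that Lemma~\ref{lem:monotonicity_Ppair} does \emph{not} settle this, because it varies $D$ at fixed $n$, whereas here $D$ is held fixed and $n$ is the variable.

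First I would make the dependence on $n$ explicit. Via the geometric representation $P\subscript{pair}(D)=P(\Vert\mathbf{w}-\mathbf{c}\Vert\leqslant R)$ with $\Vert\mathbf{c}\Vert=D$ from Eq.~\eqref{eq:undetected_err}, $P\subscript{pair}$ is the probability that the noise lands in the ball $B_R(\mathbf{c})$, so it is nondecreasing under ball inclusion: at fixed dimension and fixed $D$, shrinking $R$ can only remove mass, i.e.\ $\partial P\subscript{pair}/\partial R\geqslant 0$. I would then invoke the premise stated immediately before the theorem—that $R(\varepsilon)=\sigma F^{-1}_{\chi_n}(1-\varepsilon)$ is monotone in $n$, which forces $d\subscript{min}\superscript{min}=\lceil 4R^2(\varepsilon)/E\rceil$ of Eq.~\eqref{eq:lower_bound_dmin} to be wide-sense monotone as well—so that the radius driving $P\subscript{pair}$ moves in the intended direction as $n$ grows.

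The hard part is that $n$ enters $P\subscript{pair}(D)$ through two routes at once: the decision radius $R$, and the ambient dimension itself, via both the $\chi_n$ density $f_{\chi_n}$ and the angle fraction $\Omega_n$ in Eq.~\eqref{eq:con_prob_upon_distance}. To handle them jointly I would difference $P\subscript{pair}(D)$ between consecutive $n$ (or differentiate a continuous interpolation), splitting it as $\left.\partial_n P\subscript{pair}\right|_{R}+(\partial P\subscript{pair}/\partial R)(\mathrm{d}R/\mathrm{d}n)$, and determine the net sign uniformly over the interval. The structural fact I would lean on is that the decoder constraint $R\leqslant D\subscript{min}/2$ guarantees $D=\sqrt{Ed\subscript{min}\superscript{min}}\geqslant 2R$ throughout the interval (since $\lceil 4R^2/E\rceil$ forces $Ed\subscript{min}\superscript{min}\geqslant 4R^2$), placing the target ball $B_R(\mathbf{c})$ well out in the lower tail of $\Vert\mathbf{w}\Vert/\sigma\sim\chi_n$, which concentrates near $\sqrt{n}$. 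Exploiting this tail localization to bound the explicit-dimension term against the radius term—thereby pinning down which of the two competing $n$-effects dominates—is the crux; once it is in hand, the piecewise-monotone conclusion for $P\subscript{con}\superscript{UB}$ follows immediately from Eq.~\eqref{eq:upper_bound_Pcon}.
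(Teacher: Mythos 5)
Your reduction is the same one the paper relies on: within a $d\subscript{min}\superscript{min}$-consistent interval the distance $D=\sqrt{Ed\subscript{min}\superscript{min}}$ and the prefactor $M^k-1$ in Eq.~\eqref{eq:upper_bound_Pcon} are frozen, so the theorem collapses to the claim that $P\subscript{pair}^{(n)}(D)$ decreases in $n$ at fixed $D$, and you are right that Lemma~\ref{lem:monotonicity_Ppair} does not cover this. The genuine gap is that your attempt stops exactly at that claim: you split the $n$-dependence into an explicit-dimension term and a radius term and then declare determining their net sign ``the crux\dots once it is in hand.'' That comparison \emph{is} the theorem -- everything you do establish ($D$ constant on the interval, $\partial P\subscript{pair}/\partial R\geqslant 0$, the constraint $D\geqslant 2R$) is bookkeeping around it. The paper's proof is literally ``Trivial,'' but its substantive content is the assertion recorded in Appendix~\ref{app:monotonicity_PconLB_regarding_n}: for $D\geqslant 2R$ the cap angle satisfies $\theta_w<\pi/2$, so $\Omega_n(\theta_w)$, and with it $P\subscript{pair}^{(n)}(D)$, decays exponentially in $n$ at fixed $D$. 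Your proposal never produces an analogous quantitative statement, so as a proof it is a program, not an argument.

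Worse, the sign bookkeeping you do sketch is unresolved in the direction that matters. Since $\chi_{n+1}$ stochastically dominates $\chi_n$, the quantile $F^{-1}_{\chi_n}(1-\varepsilon)$ \emph{increases} with $n$ (it grows like $\sqrt{n}$), so within an interval $R(\varepsilon)$ grows, and by your own ball-inclusion remark the radius route pushes $P\subscript{pair}$ \emph{upward}; the dimension route must strictly dominate it, which is precisely the deferred estimate. (The sentence preceding the theorem in the paper asserts the opposite monotonicity of $F^{-1}_{\chi_n}(1-\varepsilon)$; leaning on that premise, as your phrase ``moves in the intended direction'' does, imports the difficulty rather than resolving it.) The concrete handle that would close your argument is geometric: $\Vert\mathbf{c}\Vert=D\geqslant 2R$ implies $\sin\theta_w\leqslant R/D\leqslant 1/2$ for every noise radius in the integration range of Eq.~\eqref{eq:con_prob_upon_distance}, hence $\theta_w\leqslant\pi/6$ uniformly, so $\Omega_n(\theta_w)\leqslant\Omega_n(\pi/6)$ decays like $2^{-n}$ up to polynomial factors and overwhelms the slow growth of $R(n)$. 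This makes your ``tail localization'' gesture concrete; note, though, that even this yields exponential decay of an upper bound, and strict step-by-step monotonicity $P\subscript{pair}^{(n+1)}(D)<P\subscript{pair}^{(n)}(D)$ over consecutive integers in the interval still requires a ratio comparison that neither your proposal nor, for that matter, the paper's one-word proof spells out.
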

\begin{proof}
	Trivial.
\end{proof}
However, at the boundaries between such $d\subscript{min}\superscript{min}$-consistent intervals, i.e. when $d\subscript{min}\superscript{min}(n+1)=d\subscript{min}\superscript{min}(n)-1$, since $P\subscript{pair}^{(n)}(D)$ (see definition in Appendix \ref{app:monotonicity_PconLB_regarding_n}) decreases exponentially w.r.t. $n$, and meanwhile decreases hyper-exponentially w.r.t. $D$, the inequality between $P\subscript{pair}^{(n)}\left(\sqrt{E d\subscript{min}\superscript{min}(n)}\right)$ and $P\subscript{pair}^{(n+1)}\left(\sqrt{E d\subscript{min}\superscript{min}(n)-1}\right)$ is complex to track, and the global monotonicity of $P\subscript{con}\superscript{UB}$ regarding $n$ is therefore not guaranteed.

% \todo[inline]{Revise, purely analysis instead of numerical-results-driven}
% By comparing the curves of $d\subscript{min}\superscript{min}$ and $P\subscript{con}\superscript{max}$, as shown in Figs.~\ref{fig:p2_bounds_regarding_n} and \ref{fig:p2_bounds_regarding_ebn0}, it reveals that the non-monotonicity of $P\subscript{con}\superscript{max}$ is caused by (and synchronous to) the variation of $d\subscript{min}\superscript{min}$. 

\section{Numerical Results}\label{sec:results}
To verify the analytical results presented in Sec.~\ref{sec:sensitivity_analysis}, we numerically computed the bounds of minimum Hamming distance and block confusion rate under selected system setups. More specifically, we set $M=2,~\varepsilon=0.05,~\sigma^2=0.5$, and tested the bounds under different payload lengths, blocklengths, and signal-to-noise ratios.

%%%%%%
%backup discussions about simulation results
%%%%%%
%The results are shown in Figs.~\ref{fig:p2_bounds_regarding_n} and \ref{fig:p2_bounds_regarding_ebn0}, respectively. 

%Generally, we observe from the numerical results the expected features of the bounds: the monoticities of $d\subscript{min}\superscript{min}$ and $d\subscript{min}\superscript{max}$, as well as the piecewise monoticities, global maximums, and convexities of the block confusion rate bounds. Especially, it is noticeable that under these selected typical parameter setups, the upper bound $P\subscript{con}\superscript{max}$ of block confusion rate is extremely low compared to the block error rate constraint $\varepsilon$ (by at least  $10^{10}$ times), which implies that the block errors are mostly block erasures. The same approach can also be easily used to verify such assumptions in flexible system setups. A more general assertion, however, must rely on asymptotic analysis of the bounds, which is out of the scope of this paper.

Fig.~\ref{fig:p2_bounds_regarding_n} shows how the block confusion rate bounds against blocklength $n$ for payload lengths $k = 16$ and $k = 32$. As $n$ increases, both the lower and upper bounds of the confusion rate drop rapidly, demonstrating the (piecewise) monotonic improvement of decoder robustness with block length, which is aligned with Theorems~\ref{th:monotonicity_PconLB_regarding_n} and \ref{th:Pcon_UB_regarding_n}. %Simultaneously, the minimum and maximum Hamming distance bounds increase, allowing codewords to be spaced farther apart and thus reducing the confusion rate. 
However, as we discussed in~\ref{th:Pcon_UB_regarding_n}, the monotonicity does not hold globally due to competing effects in $P\subscript{pair}^{(n)}(D)$.  
Notably, both confusion rate bounds remain many orders of magnitude below the decoding error probability with $\varepsilon = 0.05$. It confirms that almost all block errors in this regime are erasures rather than confusions. %The case with $k = 32$ exhibits lower confusion bounds than $k = 16$, highlighting the role of increased redundancy in suppressing undetected errors.

\begin{figure}[!htpb]
	\centering
	\begin{subfigure}{\linewidth}
		\centering
		\includegraphics[width=.9\linewidth, trim=5 15 5 5, clip]{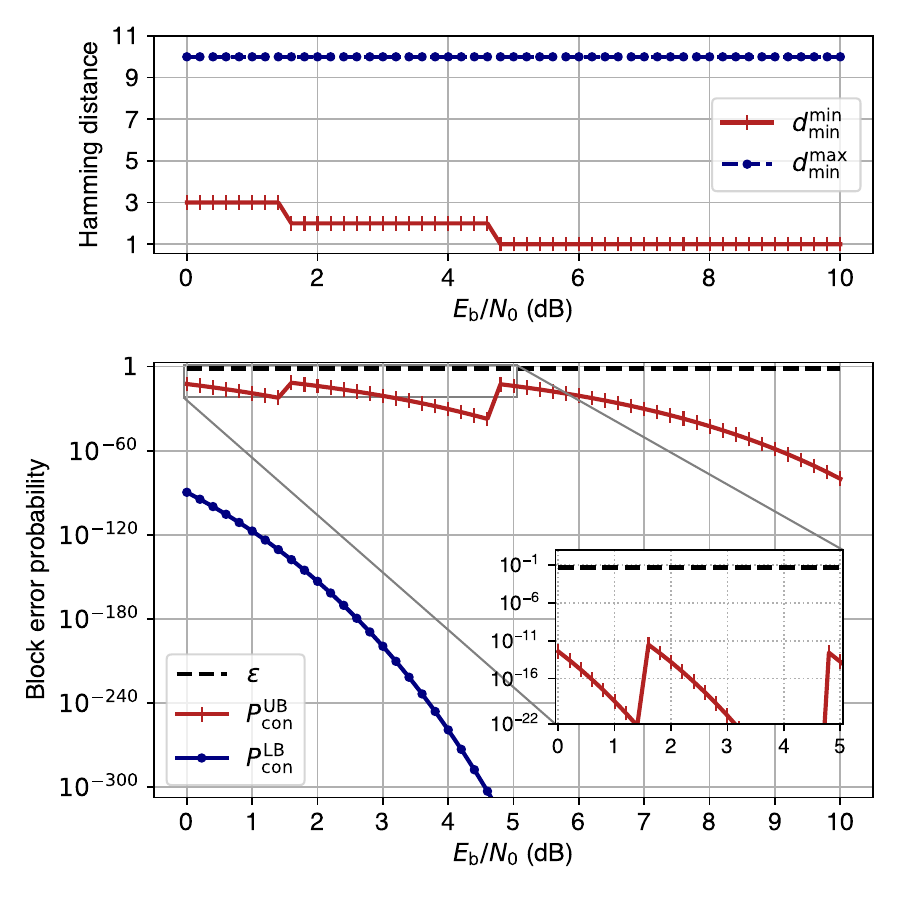}
		\subcaption{With $(32,16)$ codebooks}
		\label{subfig:p2_bounds_regarding_ebn0_s1}
	\end{subfigure}\\
	\begin{subfigure}{\linewidth}
		\centering
		\includegraphics[width=.9\linewidth, trim=5 15 5 5, clip]{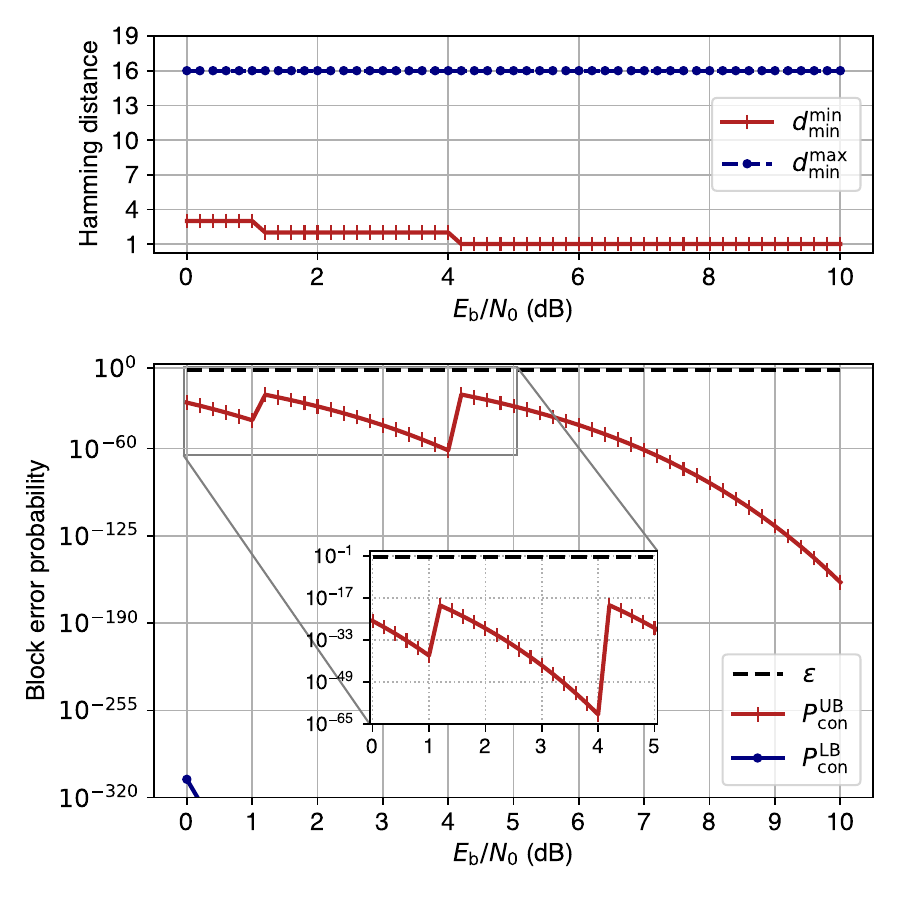}
		\subcaption{With $(62,32)$ codebooks}
		\label{subfig:p2_bounds_regarding_ebn0_s2}
	\end{subfigure}
	\caption{Block confusion rate bounds vs. $E\subscript{b}/N_0$. Note the $P\subscript{con}\superscript{LB}$ is convex as proven in Theorem~\ref{th:monotonicity_Pcon_LB_regarding_E}, but may look concave here due to the logarithmic $y$-scale.}
	\label{fig:p2_bounds_regarding_ebn0}
\end{figure}

Figure~\ref{fig:p2_bounds_regarding_ebn0} plots the block confusion rate bounds versus $E_b/N_0$ for two codebook setups, i.e., $(32, 16)$ and $(62, 32)$. It shows that both the lower and upper bounds decrease with increasing SNR, in agreement with Theorem~\ref{th:monotonicity_ers_and_con_regarding_R}. We can also observe that the step-wise increases in $d_{\min}^{\min}$ and $d_{\min}^{\max}$ induce discontinuities in the upper bound curves. This is due to the discrete nature of distance bounds under fixed decoding radius. Similar to Fig.~\ref{fig:p2_bounds_regarding_n}, the confusion bounds remain several orders below the BLER constraint, again verifying that confusion events are practically negligible compared to erasures.

\section{Conclusion and Outlooks}\label{sec:conclusion}
In this paper, we have derived and analyzed the probability bounds of confusions and erasures of error-bounded block decoders with finite blocklength, including the sensitivity of these bounds to the blocklength and \ac{snr}. Numerical results have confirmed these findings, and revealed that the block confusion rate is extremely low compared to the block erasure rate, supporting to approximately consider all block errors in the \ac{fbl} regime as erasures.

For future work, we see potential interest in extending the analysis to more general codebooks beyond the spherical uniform assumption. Specifically, spherical codebooks require either \ac{psk} constellations or permutation codes, which present practical construction challenges for general applications. Additionally, non-uniform codeword distributions may offer efficiency advantages over the uniform distribution assumed in this work.

% \section*{Acknowledgment}
% The work of B. Han and H. D. Schotten was in part supported by the Federal Ministry of Research, Technology and Space (BMFTR) of Germany in the project Open6GHub+ (16KIS2406). The work of Y. Zhu and A. Schmeink was in part supported by BMFTR in the project 6GEM+ (16KIS2409K). The work of R. F. Schaefer was in part supported by BMFTR in the project 6G-life (16KIS2413K), and in part supported by the German Research Foundation (DFG) in the Cluster of of Excellence ``Centre for Tactile Internet with Human-in-the-Loop'' (CeTI) of TU Dresden (390696704). The work of G. Caire was in part supported by BMFTR in the project 6G-RIC (16KISK030). The work of H. V. Poor was supported in part by an Innovation Grant from Princeton NextG. B. Han (bin.han@rptu.de) is the corresponding author.

\appendices
% \section{Proof of theorem~\ref{th:monotonicity_ers_and_con_regarding_R}}\label{app:monotonicity_ers_and_con_regarding_R}

\section{Proof of Lemma~\ref{lem:dmin_bounds}}\label{app:dmin_bounds}
\begin{proof}
	Given $E,\sigma,\varepsilon$, the minimum Euclidian distance between the closest pair of codewords in $\mathcal{X}$ that rejects list-output decoding is simply determined by
	\begin{equation}
		D\subscript{min}(\mathcal{X})\geqslant 2R(\varepsilon)=2\sigma F^{-1}_{\chi_n}(1-\varepsilon),
		\label{eq:lower_bound_Dmin}
	\end{equation}
	and therefore the \emph{integer} minimum Hamming distance
	\begin{equation}
		d\subscript{min}(\mathcal{X})\geqslant \left\lceil\frac{4R^2(\varepsilon)}{E}\right\rceil.
	\end{equation}

	To derive the upper bound $d\subscript{min}\superscript{max}$, we leverage the Hamming bound: the number of $M$-ary codeswords in a codebook $\mathcal{X}$ with minimum Hamming distance $d\subscript{min}$ is bounded by
	\begin{equation}
		A_M(n,d\subscript{min})\leqslant \frac{M^n}{\sum\limits_{l=0}^t\comb{n}{l}(M-1)^l}
		\label{eq:hamming_bound}	
	\end{equation}
	where $t = \lfloor (d\subscript{min}-1)/2\rfloor$, so that $2t+1\leqslant d\subscript{min}<2t+3$. Since $d\subscript{min}\in\mathbb{N}^+$, we can further tighten it to $d\subscript{min}\leqslant 2t+2$.
	Especially, since we are considering $\vert\mathcal{X}\vert=M^k$, we can replace $A_M(n,d\subscript{min})$ with $M^k$ so that
	\begin{equation}
		\sum_{l=0}^t\comb{n}{l}(M-1)^l \leqslant M^{n-k} = M^r
	\end{equation}
	and thererfore
	\begin{equation}
		d\subscript{min}(\mathcal{X})\leqslant 2\min\left\{t\in\mathbb{N}\left\vert\sum\limits_{l=0}^t\comb{n}{l}(M-1)^l>M^r\right.\right\}
	\end{equation}
	% which provides an upper bound of $d\subscript{min}$, and therewith an upper bound of $D\subscript{min}(\mathcal{X})$ as well:
	% \begin{equation}
	% 	D\subscript{min}(\mathcal{X})\leqslant \sqrt{Ed\subscript{min}\superscript{max}}
	% 	\label{eq:upper_bound_Dmin}
	% \end{equation}
\end{proof}

\section{Proof of Lemma~\ref{lem:monotonicity_Ppair}}\label{app:monotonicity_Ppair}
\begin{proof}

	First, regarding the monoticity: we recall the angle fraction given by Eq.~\eqref{eq:angle_fraction} that $\Omega_n(\theta) = C_n \int_0^\theta \left(\sin\phi\right)^2 \diff\phi$ where $C_n=\left(\int_0^\pi\left(\sin\phi\right)^{n-2}\diff\phi\right)^{-1}$. Its first and second derivatives are
	\begin{align}
		\frac{\partial\Omega_n}{\partial\theta}&= C_n \left(\sin\theta\right)^{n-2},\text{~and}\\
		\frac{\partial^2 \Omega_n}{\partial\theta^2} &= (n-2)C_n\left(\sin\theta\right)^{n-3} \cos\theta,	
	\end{align}
	respectively. Noting the boundary terms vanish as $\Omega_n(0)=0$:
	\begin{equation}
	\frac{\partial P\subscript{pair}}{\partial D} = \int_{u_-}^{u_+} f_{\chi_n}(u) \frac{\partial\Omega_n}{\partial\theta_w} \frac{\partial \theta_w}{\partial D}
	\diff u
	\end{equation}
	From $\cos\theta_w = \frac{r^2 + D^2 - R^2}{2rD}$, we obtain $\frac{\partial \theta_w}{\partial D} = - \frac{D^2 - r^2 + R^2}{2rD^2 \sin \theta_w} < 0$ for $r \in (\sigma u_-, \sigma u_+)$. Thus, $\frac{\partial P\subscript{pair}}{\partial D}<0$, the monoticity is proven.

	Then, regarding the convexity:
	\begin{equation}
		\begin{split}
			\frac{\partial^2P\subscript{pair}}{\partial D^2} = \int_{u_-}^{u_+} f_{\chi_n}(u) \left[\frac{\partial^2\Omega_n}{\partial\theta_w^2} \left( \frac{\partial \theta_w}{\partial D} \right)^2 +\frac{\partial\Omega_n}{\partial\theta_w} \frac{\partial^2 \theta_w}{\partial D^2} \right]\diff u
		\end{split}
	\end{equation}
	For $D \geqslant 2R$, $\theta_w < \pi/2$, which ensures $\frac{\partial^2\Omega_n}{\partial\theta_w^2} > 0$. Furthermore, the geometric decay of the subtended angle $\theta_w$ with respect to $D$ is convex, i.e., $\frac{\partial^2 \theta_w}{\partial D^2} > 0$. Since $f_{\chi_n}(u) > 0$, it follows that $\frac{\partial^2 P\subscript{pair}}{\partial D^2}>0$, confirming the convexity.
\end{proof}

\section{Proof of Theorem~\ref{th:lower_bound_Pcon}}\label{app:lower_bound_Pcon}
\begin{proof}
	Recall Eq.~\eqref{eq:overall_con_prob} and leverage Lemma~\ref{lem:dmin_bounds}, we can obtain a lower bound of the block confusion rate by considering only the confusions between closest neighbor codeword pairs:
	\begin{equation}\label{eq:lower_bound_Pcon_preliminary}
		\begin{split}
			P\subscript{con}&=\frac{1}{\vert\mathcal{X}\vert}\sum\limits_{\substack{\mathbf{x}\in\mathcal{X}\\\mathbf{x}'\in\mathcal{V}_\Sigma(\mathcal{X},\mathbf{x})}}P(\mathbf{x}\to\mathbf{x}')
			\\&\geqslant \frac{1}{\vert\mathcal{X}\vert}\sum\limits_{\substack{\mathbf{x}\in\mathcal{X}\\\mathbf{x}'\in\mathcal{V}(\mathcal{X},\mathbf{x},d\subscript{min})}}P\subscript{pair}\left(D\subscript{min}(\mathcal{X})\right)
			\\&\geqslant \frac{P\subscript{pair}\left(\sqrt{Ed\subscript{min}\superscript{max}}\right)}{\vert\mathcal{X}\vert}\sum\limits_{\mathbf{x}\in\mathcal{X}}\left\vert\mathcal{V}(\mathcal{X},\mathbf{x},d\subscript{min}\superscript{max})\right\vert,
		\end{split}
	\end{equation}
	where 
	the equality holds if and only if all pair of distinct codewords share the same distance, \emph{which is only possible for $\Vert\mathcal{X}\Vert=4$}.
	% the first equity holds if and only if $d(\mathbf{x},\mathbf{x}')=d\subscript{min}$ for all $\mathbf{x}\neq\mathbf{x}'$, and the second equality holds if and only if $d\subscript{min}=d\subscript{min}\superscript{max}$. 
	It is trivial to see that $d\subscript{min}\superscript{max}$ is achieved when the $M^k$ codewords are possibly most uniformly distributed in the $n$-dimensional $M$-ary space $\mathcal{M}^n$, which allows to estimate
	\begin{equation}
		\begin{split}
			\left\vert\mathcal{V}\left(\mathcal{X},\mathbf{x},d\subscript{min}\superscript{max}\right)\right\vert&\approx\frac{M^k}{M^n}\left\vert\mathcal{V}\left(\mathcal{M}^n,\mathbf{x},d\subscript{min}\superscript{max}\right)\right\vert
			\\&=\frac{1}{M^{n-k}}\comb{n}{d\subscript{min}\superscript{max}}(M-1)^{d\subscript{min}\superscript{max}}
		\end{split}
	\end{equation}
	with high accuracy, and therewith turns Eq.~\eqref{eq:lower_bound_Pcon_preliminary} into 
	\begin{equation}
		P\subscript{con}\geqslant \comb{n}{d\subscript{min}\superscript{max}}\frac{(M-1)^{d\subscript{min}\superscript{max}}}{M^{n-k}}P\subscript{pair}\left(\sqrt{Ed\subscript{min}\superscript{max}}\right).
	\end{equation}
\end{proof}

\section{Proof of Theorem~\ref{th:upper_bound_Pcon}}\label{app:upper_bound_Pcon}
\begin{proof}
	We leverage the $D$-monotonicity of $P\subscript{pair}$ (Lemma~\ref{lem:monotonicity_Ppair}) that for all $(\mathbf{x},\mathbf{x}')\in\mathcal{X}^2$,
	% \begin{equation}
	$
		P(\mathbf{x}\to\mathbf{x}')\leqslant P\subscript{pair}\left(D\subscript{min}(\mathcal{X})\right)
	$,
	% \end{equation}
	so
	\begin{equation}
		\begin{split}
			P\subscript{con}&=\frac{1}{\vert\mathcal{X}\vert}\sum\limits_{\substack{\mathbf{x}\in\mathcal{X}\\\mathbf{x}'\in\mathcal{V}_\Sigma(\mathcal{X},\mathbf{x})}}P(\mathbf{x}\to\mathbf{x}')
			\\&=\frac{1}{\vert\mathcal{X}\vert}\sum\limits_{\substack{\mathbf{x}\in\mathcal{X}\\\mathbf{x}'\in\mathcal{V}_\Sigma(\mathcal{X},\mathbf{x})}}P\subscript{pair}(\Vert\mathbf{x}-\mathbf{x}'\Vert)
			\\&\leqslant\frac{1}{\vert\mathcal{X}\vert}\sum\limits_{\substack{\mathbf{x}\in\mathcal{X}\\\mathbf{x}'\in\mathcal{V}_\Sigma(\mathcal{X},\mathbf{x})}}P\subscript{pair}(D\subscript{min}(\mathcal{X}))
			\\&\leqslant\frac{P\subscript{pair}\left(\sqrt{Ed\subscript{min}\superscript{min}}\right)}{\vert\mathcal{X}\vert}\sum\limits_{\mathbf{x}\in\mathcal{X}}\left\vert\mathcal{V}_\Sigma(\mathcal{X},\mathbf{x})\right\vert
			\\&\leqslant P\subscript{con}\superscript{UB},
			% \underset{\triangleq P\subscript{con}\superscript{UB}}{\underbrace{(M^k-1)P\subscript{pair}\left(\sqrt{Ed\subscript{min}\superscript{min}}\right)}},
		\end{split}
		% \label{eq:upper_bound_Pcon}
	\end{equation}
	where the equality holds when $\Vert\mathbf{x}-\mathbf{x}'\Vert$ is the same for all $(\mathbf{x},\mathbf{x}')$, \emph{which is only possible for $\Vert\mathcal{X}\Vert=4$}.
\end{proof}

\section{Proof of Theorem~\ref{th:monotonicity_Pcon_LB_regarding_E}}\label{app:monotonicity_Pcon_LB_regarding_E}
\begin{proof}
	As defined in Eq.~\eqref{eq:upper_bound_dmin}, $d\subscript{min}\superscript{max}$ is independent from $E$ or $\varepsilon$ but solely determined by the tuple $(M, r, n)$. Thus, given certain $(M, r, n)$, $d\subscript{min}\superscript{max}$ is a fixed constant, which ensures the continuity and monotonicity of $P\subscript{con}\superscript{LB}$ regarding $E$:
	\begin{align}
		&\frac{\partial P\subscript{con}\superscript{LB}}{\partial E}=\comb{n}{d\subscript{min}\superscript{max}}\frac{(M-1)^{d\subscript{min}\superscript{max}}}{M^{n-k}}\frac{\partial}{\partial E}P\subscript{pair}\left(\sqrt{Ed\subscript{min}\superscript{max}}\right)\nonumber\\
		=&\underset{>0}{\underbrace{\comb{n}{d\subscript{min}\superscript{max}}\frac{(M-1)^{d\subscript{min}\superscript{max}}}{2 M^{n-k}}\sqrt{\frac{d\subscript{min}\superscript{max}}{E}}}}\underset{<0\text{ (Lemma~\ref{lem:monotonicity_Ppair})}}{\underbrace{
		\left.\frac{\partial P\subscript{pair}(D)}{\partial D}\right\vert_{D=\sqrt{Ed\subscript{min}\superscript{max}}}}}\nonumber\\
		<&0.
	\end{align}
	Moreover, for convenience of notation, we define the auxiliary variable
	% \begin{equation}
	$
		\Lambda\triangleq\comb{n}{d\subscript{min}\superscript{max}}\frac{(M-1)^{d\subscript{min}\superscript{max}}}{M^{n-k}},
	$
	% \end{equation}
	and have
	\begin{equation}
		\begin{split}
		\frac{\partial^2 P\subscript{con}\superscript{LB}}{\partial E^2}=&\underset{<0}{\underbrace{-\frac{\Lambda}{4}\sqrt{\frac{d\subscript{min}\superscript{max}}{E^3}}}}\underset{<0\text{ (Lemma~\ref{lem:monotonicity_Ppair})}}{\underbrace{\frac{\partial P\subscript{pair}(D)}{\partial D}\bigg\vert_{D=\sqrt{Ed\subscript{min}\superscript{max}}}}}\\
		&+\underset{>0}{\underbrace{\frac{\Lambda}{4}\frac{d\subscript{min}\superscript{max}}{E}}}\underset{>0\text{ (Lemma~\ref{lem:monotonicity_Ppair})}}{\underbrace{\frac{\partial^2 P\subscript{pair}(D)}{\partial D^2}\bigg\vert_{D=\sqrt{Ed\subscript{min}\superscript{max}}}}}\\
		>&0,
		\end{split}
	\end{equation}
	which ensures the convexity of $P\subscript{con}\superscript{LB}$ regarding $E$.
\end{proof}

\section{Proof of Lemma~\ref{lem:monotonicity_dmin_max_regarding_n}}\label{app:monotonicity_dmin_max_regarding_n}
\begin{proof}
	For convenience of notation, we define auxiliaries
	% \begin{align}
	% 	V_n(t)&\triangleq\sum\limits_{l=0}^t\comb{n}{l}(M-1)^l,\\
	% 	T_n&\triangleq d\subscript{min}\superscript{max}(n)/2,
	% \end{align}
	$V_n(t)\triangleq\sum\limits_{l=0}^t\comb{n}{l}(M-1)^l$ and $T_n\triangleq d\subscript{min}\superscript{max}(n)/2$,	
	from Eq.~\eqref{eq:upper_bound_dmin} we know
	\begin{align}
		V_n(T_n-1)&\leqslant M^{n-k}\label{eq:Vn_Tn-1}
		,\\V_n(T_n)&> M^{n-k}
		,\\V_{n+1}(T_{n+1}-1)&\leqslant M^{n+1-k}
		,\\V_{n+1}(T_{n+1})&> M^{n+1-k}\label{eq:Vn+1_Tn+1}
		.
	\end{align}
	For all positive integers $(n,l,M)$ that $n>l$ and $M\geqslant 2$:
	\begin{equation}
		\begin{cases}
			\comb{n-1}{l}+\comb{n-1}{l-1}=\comb{n}{l} &\text{(Pascal's rule)}\\
			\comb{n}{0}(M-1)^0=1,&
		\end{cases}
	\end{equation}
	\begin{equation}
		\begin{split}\therefore
		&V_{n+1}(t)=\sum\limits_{l=0}^t\comb{n+1}{l}(M-1)^l\\
		=&1+\sum\limits_{l=1}^t\comb{n+1}{l}(M-1)^l\\
		=&1+\sum\limits_{l=1}^t\left[\comb{n}{l}+\comb{n}{l-1}\right](M-1)^l\\
		% =&1+\sum\limits_{l=1}^t\comb{n}{l}(M-1)^l+\sum\limits_{l=1}^{t}\comb{n}{l-1}(M-1)^{l}\\
		=&\sum\limits_{l=0}^t\comb{n}{l}(M-1)^l+\sum\limits_{q=0}^{t-1}\comb{n}{q}(M-1)^{q+1}\\
		=&V_n(t)+(M-1)V_n(t-1)\\
		<&MV_n(t).
		\end{split}	
	\end{equation}
	Thus, it always holds that
	\begin{equation}
		V_n(t)\leqslant M^{n-k}\Rightarrow V_{n+1}(t)\leqslant M^{n+1-k}.
	\end{equation}
	Linking this with Eqs.~\eqref{eq:Vn_Tn-1}--\eqref{eq:Vn+1_Tn+1}, we have
	\begin{equation}
		\forall t\in\mathbb{N}^+: t<T_n\Rightarrow t<T_{n+1},
	\end{equation}
	which implies that $T_{n+1}\geqslant T_n$, and therefore
	\begin{equation}
		d\subscript{min}\superscript{max}(n+1)\geqslant d\subscript{min}\superscript{max}(n).
	\end{equation}
\end{proof}

\section{Proof of Lemma~\ref{lem:bound_dmin_max_regarding_n}}\label{app:bound_dmin_max_regarding_n}
\begin{proof}
	Define $\delta =\left[d\subscript{min}\superscript{max}(n)-1\right]/2n$, obviously it always holds $\delta<1/2$. Meanwhile,from Eq.~\eqref{eq:upper_bound_dmin} we have
	\begin{equation}
	\sum_{l=0}^{\delta n} \binom{n}{l} (M-1)^l \leqslant M^{n-k}.
	\end{equation}
	With the $M$-ary entropy function
	\begin{equation}\label{eq:M-ary_entropy}
		H_M(\delta)=\delta\log_M(M-1)-\delta\log_M\delta-(1-\delta)\log_M(1-\delta),
	\end{equation}
	we have the wel-known bound 
	\begin{equation}
	\frac{1}{n+1}M^{n H_M(\delta)} \leqslant \sum_{l=0}^{\delta n} \binom{n}{l} (M-1)^l\leqslant M^{n-k},
	\end{equation}
	so that
	\begin{equation}
		H_M(\delta)\leqslant 1-\frac{k}{n}+\frac{\log_M(n+1)}{n}.
	\end{equation}
	Note that $H_M(\delta)$	is strictly increasing for $\delta \in [0, \frac{M-1}{M}]$, which implies $H_M^{-1}(x)$ is strictly increasing for $x \in [H_M(0), H_M\left(\frac{M-1}{M}\right)]$. Since $\delta<\frac{1}{2}$ and $M \geqslant 2$, we are always in the increasing region. Thus:
	\begin{align}
		&\delta=\frac{d\subscript{min}\superscript{max}(n)-1}{2n} \leqslant H_M^{-1}\left[1-\frac{k}{n}+\frac{\log_M(n+1)}{n}\right],\\
		&\frac{d\subscript{min}\superscript{max}(n)}{n} \leqslant 2H_M^{-1}\left[1-\frac{k-\log_M(n+1)}{n}\right]+\frac{1}{n}\label{eq:aux_bound}.
	\end{align}
	Note that $H_M^{-1}(1)=\frac{M-1}{M}$, the lemma is therefore proven.

\end{proof}

\section{Proof of Lemma~\ref{lem:bound_lambda}}\label{app:bound_lambda}
\begin{proof}
	Recall Eq.~\eqref{eq:M-ary_entropy}, we have
	\begin{equation}
		\begin{split}
			H_M\left(\frac{1}{2M}\right)=&\frac{1}{2M}\log_M[2M(M-1)]\\
			&-\left(1-\frac{1}{2M}\right)\log_M\left(1-\frac{1}{2M}\right)\\
			>&\frac{\log_M(2M^2)}{2M}>\frac{1}{k+1}.
		\end{split}
	\end{equation}
	Hence, $\lambda=2H_M^{-1}\left(\frac{1}{k+1}\right)<\frac{1}{M}$
\end{proof}

\section{Proof of Theorem~\ref{th:monotonicity_PconLB_regarding_n}}\label{app:monotonicity_PconLB_regarding_n}
\begin{proof}
	First, note that given certain $D$, $P\subscript{pair}$ monotonically decreases with increasing $n$, here we write it as $P\subscript{pair}^{(n)}$. More specifically, since it always holds $\theta_w(D,R)<\frac{\pi}{2}$ for $D\geqslant 2R$, the decay rate of $P\subscript{pair}^{(n)}(D)$ is exponential regarding $n$.

	Thus, within every $d\subscript{min}\superscript{max}$-constant interval of $n$, i.e. when $d\subscript{min}\superscript{max}(n+1)=d\subscript{min}\superscript{max}(n)$, we have
	\begin{equation}
		\begin{split}
			&\frac{P\subscript{con}\superscript{LB}(n+1)}{P\subscript{con}\superscript{LB}(n)}\\
			=&\frac{
			\comb{n+1}{d\subscript{min}\superscript{max}(n)}\frac{(M-1)^{d\subscript{min}\superscript{max}(n)}}{M^{n+1-k}}P\subscript{pair}^{(n+1)}\left(\sqrt{E d\subscript{min}\superscript{max}}\right)
			}{
			\comb{n}{d\subscript{min}\superscript{max}(n)}\frac{(M-1)^{d\subscript{min}\superscript{max}(n)}}{M^{n-k}}P\subscript{pair}^{(n)}\left(\sqrt{E d\subscript{min}\superscript{max}}\right)}\\
			=&\underset{\in(0,1)\text{ (Lemma~\ref{lem:bound_lambda})}}{\underbrace{\frac{(n+1)}{[n+1-d\subscript{min}\superscript{max}(n)]M}}}
			\cdot\underset{\in(0,1)}{\underbrace{\frac{P\subscript{pair}^{(n+1)}\left(Ed\subscript{min}\superscript{max}\right)}{P\subscript{pair}^{(n)}\left(Ed\subscript{min}\superscript{max}\right)}}}\in(0,1)
		\end{split}
		\label{eq:ratio_PconLB_n+1_n}
	\end{equation}
	Now investigate the $n$ instances where $d\subscript{min}\superscript{max}$ increases. From Eq.~\eqref{eq:upper_bound_dmin} it implies $d\subscript{min}\superscript{max}(n+1)=d\subscript{min}\superscript{max}(n)+2$, so:
	\begin{equation}
		\begin{split}
			&\frac{P\subscript{con}\superscript{LB}(n+1)}{P\subscript{con}\superscript{LB}(n)}\\
			=&\frac{
			\comb{n+1}{d\subscript{min}\superscript{max}(n)+2}\frac{(M-1)^{d\subscript{min}\superscript{max}(n)+2}}{M^{n+1-k}}
			}{
			\comb{n}{d\subscript{min}\superscript{max}(n)}\frac{(M-1)^{d\subscript{min}\superscript{max}(n)}}{M^{n-k}}
			}\\
			&\cdot\frac{P\subscript{pair}^{(n+1)}\left(\sqrt{E(d\subscript{min}\superscript{max}(n)+2)}\right)
			}{P\subscript{pair}^{(n)}\left(\sqrt{Ed\subscript{min}\superscript{max}(n)}\right)}\\
			=&\overset{\in\mathcal{O}(n^\rho),~0<\rho\leqslant 2}{\overbrace{\frac{[n+1][n-d\subscript{min}\superscript{max}(n)]}{[d\subscript{min}\superscript{max}(n)+1][d\subscript{min}\superscript{max}(n)+2]}}}\cdot\overset{n\text{-independent}}{\overbrace{\frac{(M-1)^2}{M}}}\\
			&\cdot\underset{\in\mathcal{O}(e^{-\alpha n}),~\alpha>0}{\underbrace{\frac{P\subscript{pair}^{(n+1)}\left(\sqrt{E(d\subscript{min}\superscript{max}(n)+2)}\right)}{P\subscript{pair}^{(n)}\left(\sqrt{E(d\subscript{min}\superscript{max}(n))}\right)}}},
		\end{split}
	\end{equation}
	which reveals that with enough large $n$, $P\subscript{con}\superscript{LB}(n+1)<P\subscript{con}\superscript{LB}(n)$ is ensured, and $P\subscript{con}\superscript{LB}$ becomes therefore monotonic.
\end{proof}

\section{Proof of Theorem~\ref{th:Pcon_UB_regarding_E}}\label{app:Pcon_UB_regarding_E}
\begin{proof}
	Recall the definition Eq.~\eqref{eq:upper_bound_Pcon}, fixing $\left(M, k, d\subscript{min}\superscript{min}\right)$, it is trivial to see that $P\subscript{con}\superscript{UB}$ has jump discontinuities at the jump discontinuities of $d\subscript{min}\superscript{min}$, and piecewise continuous in between. Especially, these jump discountinuities are at $E_i=4R^2(\varepsilon)/i$ for $i\in\mathbb{N}^+$, where we have 
	\begin{equation}
		d\subscript{min}\superscript{min}(E_i)=\left\lceil\frac{4R^2(\varepsilon)}{4R^2(\varepsilon)/i}\right\rceil=i.
	\end{equation}
	Moreover, in each of the intervals $[E_i,E_{i+1})$, its first-order partial derivative regarding $E$ is
	\begin{equation}
		\begin{split}
			&\frac{\partial P\subscript{con}\superscript{UB}}{\partial E}=\frac{\partial}{\partial E}\left[(M^k-1)P\subscript{pair}\left(\sqrt{Ed\subscript{min}\superscript{min}}\right)\right]\\
			=&\frac{M^k-1}{2}\sqrt{\frac{d\subscript{min}\superscript{min}}{E}}\cdot\left.\frac{\partial}{\partial D}P\subscript{pair}(D)\right\vert_{D=\sqrt{Ed\subscript{min}\superscript{min}}}<0,
		\end{split}
		\label{eq:parital_Pcon_upper_bound_against_E}
	\end{equation}
	which implies that $P\subscript{con}\superscript{UB}$ monotonically decreases w.r.t. $E$ within every individual interval of constant $d\subscript{min}\superscript{min}$.
\end{proof}

\section{Proof of Corollary~\ref{co:Pcon_UB_regarding_E_extremes}}\label{app:Pcon_UB_regarding_E_extremes}
\begin{proof}
	From Theorem~\ref{th:Pcon_UB_regarding_E} and Eq.~\eqref{eq:lower_bound_dmin}, it is trivial to see that $P\subscript{con}\superscript{UB}$ is piecewise decreasing within every interval $[E_i,E_{i+1})$ where $i\in\mathbb{N}^+$.
	
	Now we focus on the left and right limits at the jump discontinuities. Define $E_i^-\triangleq \lim\limits_{\delta\to 0} E_i-\vert\delta\vert$, we have $d\subscript{min}\superscript{min}(E_i^-)=i+1$. Moreover, since $d\subscript{min}\superscript{min}=i$ and $E_{i+1}<E_{i}$, it holds that 
	\begin{equation}
	\begin{split}
	% &\lim_{\delta\to 0^+}(M^k-1)P\subscript{pair}\left(\sqrt{E_i^-d\subscript{min}\superscript{min}(E_i^-)}\right)\\
		&P\subscript{con}\superscript{UB}(E_i^-)\\
		=&(M^k-1)\lim_{\delta\to 0^+}P\subscript{pair}\left(\sqrt{(E_i-\delta)(i+1)}\right)\nonumber\\
		=&(M^k-1)P\subscript{pair}\left(\sqrt{E_i(i+1)}\right)\\
		<&(M^k-1)P\subscript{pair}\left(\sqrt{E_i\cdot i}\right)
		=P\subscript{con}\superscript{UB}(E_i).\nonumber
		% \\=&(M^k-1)P\subscript{pair}\left(\sqrt{E_id\subscript{min}\superscript{min}(E_i)}\right)
	\end{split}
	\end{equation}
	which makes $P\subscript{con}\superscript{UB}(E_i)$ local maximums and $P\subscript{con}\superscript{UB}(E_i^-)$ local minimums.
	Furthermore, for all $i\in\mathbb{N}^+$, we have
	\begin{equation}
		E_id\subscript{min}\superscript{min}(E_i)=4R^2(\varepsilon)=E_{i+1}d\subscript{min}\superscript{min}(E_{i+1}),
	\end{equation}
	which reveals that all the local maximums are equal.
	Meanwhile, since
	\begin{equation}
		\begin{split}
			&\lim_{\delta\to 0^+}E_i^-d\subscript{min}\superscript{min}(E_i^-)=4R^2(\varepsilon)\cdot\frac{i+1}{i}\\
			>&4R^2(\varepsilon)\cdot\frac{i+2}{i+1}=\lim_{\delta\to 0^+}E_{i+1}^-d\subscript{min}\superscript{min}(E_{i+1}^-),
		\end{split}
	\end{equation}
	we know that
	\begin{equation}
		\begin{split}
			&\lim_{\delta\to 0^+}(M^k-1)P\subscript{pair}\left(\sqrt{E_i^-d\subscript{min}\superscript{min}(E_i^-)}\right)\\
			<&\lim_{\delta\to 0^+}(M^k-1)P\subscript{pair}\left(\sqrt{E_{i+1}^-d\subscript{min}\superscript{min}(E_{i+1}^-)}\right),
		\end{split}
	\end{equation}
	which implies that the local minimums are monotonically decreasing regarding $i$.
\end{proof}

\clearpage
\bibliographystyle{IEEEtran}
\bibliography{references}

% Generated by IEEEtran.bst, version: 1.14 (2015/08/26)
\begin{thebibliography}{10}
\providecommand{\url}[1]{#1}
\csname url@samestyle\endcsname
\providecommand{\newblock}{\relax}
\providecommand{\bibinfo}[2]{#2}
\providecommand{\BIBentrySTDinterwordspacing}{\spaceskip=0pt\relax}
\providecommand{\BIBentryALTinterwordstretchfactor}{4}
\providecommand{\BIBentryALTinterwordspacing}{\spaceskip=\fontdimen2\font plus
\BIBentryALTinterwordstretchfactor\fontdimen3\font minus
  \fontdimen4\font\relax}
\providecommand{\BIBforeignlanguage}[2]{{%
\expandafter\ifx\csname l@#1\endcsname\relax
\typeout{** WARNING: IEEEtran.bst: No hyphenation pattern has been}%
\typeout{** loaded for the language `#1'. Using the pattern for}%
\typeout{** the default language instead.}%
\else
\language=\csname l@#1\endcsname
\fi
#2}}
\providecommand{\BIBdecl}{\relax}
\BIBdecl

\bibitem{PPV2010channel}
Y.~Polyanskiy, H.~V. Poor, and S.~Verd{\'u}, ``Channel coding rate in the
  finite blocklength regime,'' \emph{{IEEE} Trans. Inf. Theory}, vol.~56,
  no.~5, pp. 2307--2359, 2010.

\bibitem{Forney1968exponential}
G.~Forney, ``Exponential error bounds for erasure, list, and decision feedback
  schemes,'' \emph{{IEEE} Trans. Inf. Theory}, vol.~14, no.~2, pp. 206--220,
  1968.

\bibitem{Merhav2008error}
N.~Merhav, ``Error exponents of erasure/list decoding revisited via moments of
  distance enumerators,'' \emph{{IEEE} Trans. Inf. Theory}, vol.~54, no.~10,
  pp. 4439--4447, 2008.

\bibitem{SM2010exact}
A.~Somekh~Baruch and N.~Merhav, ``Exact random coding exponents for erasure
  decoding,'' in \emph{2010 IEEE ISIT}, 2010, pp. 260--264.

\bibitem{HZS+2025semantic}
B.~Han, Y.~Zhu, A.~Schmeink, G.~Caire, and H.~D. Schotten, ``A semantic model
  for physical layer deception,'' in \emph{2025 IEEE ICC}, 2025.

\bibitem{Shannon1949communication}
C.~Shannon, ``Communication in the presence of noise,'' \emph{Proceedings of
  the {IRE}}, vol.~37, no.~1, pp. 10--21, 1949.

\bibitem{Hamming1950error}
R.~W. Hamming, ``Error detecting and error correcting codes,'' \emph{The Bell
  System Technical Journal}, vol.~29, no.~2, pp. 147--160, 1950.

\bibitem{DPT+2002finite}
C.~Di, D.~Proietti, I.~Telatar, T.~Richardson, and R.~Urbanke, ``Finite-length
  analysis of low-density parity-check codes on the binary erasure channel,''
  \emph{IEEE Transactions on Information Theory}, vol.~48, no.~6, pp.
  1570--1579, 2002.

\bibitem{DKP2016toward}
G.~Durisi, T.~Koch, and P.~Popovski, ``Toward massive, ultrareliable, and
  low-latency wireless communication with short packets,'' \emph{Proceedings of
  the IEEE}, vol. 104, no.~9, pp. 1711--1726, 2016.

\bibitem{YDKP2014quasi}
W.~Yang, G.~Durisi, T.~Koch, and Y.~Polyanskiy, ``Quasi-static multiple-antenna
  fading channels at finite blocklength,'' \emph{IEEE Transactions on
  Information Theory}, vol.~60, no.~7, pp. 4232--4265, 2014.

\bibitem{AY2019coherent}
A.~Collins and Y.~Polyanskiy, ``Coherent multiple-antenna block-fading channels
  at finite blocklength,'' \emph{IEEE Transactions on Information Theory},
  vol.~65, no.~1, pp. 380--405, 2019.

\bibitem{QK2025noncoherent}
\BIBentryALTinterwordspacing
C.~Qi and T.~Koch, ``On noncoherent multiple-antenna rayleigh block-fading
  channels at finite blocklength,'' 2025. [Online]. Available:
  \url{https://arxiv.org/abs/2503.01504}
\BIBentrySTDinterwordspacing

\bibitem{HGS2015capacity}
Y.~Hu, J.~Gross, and A.~Schmeink, ``On the capacity of relaying with finite
  blocklength,'' \emph{IEEE Transactions on Vehicular Technology}, vol.~65,
  no.~3, pp. 1790--1794, 2016.

\bibitem{XYC+2020noma}
Z.~Xiang, W.~Yang, Y.~Cai, Z.~Ding, Y.~Song, and Y.~Zou, ``Noma-assisted secure
  short-packet communications in iot,'' \emph{IEEE Wireless Communications},
  vol.~27, no.~4, pp. 8--15, 2020.

\bibitem{AFSA2018wireless}
O.~L. Alcaraz~López, E.~M.~G. Fernández, R.~D. Souza, and H.~Alves,
  ``Wireless powered communications with finite battery and finite
  blocklength,'' \emph{IEEE Transactions on Communications}, vol.~66, no.~4,
  pp. 1803--1816, 2018.

\bibitem{DGP+2006capacity}
A.~Dana, R.~Gowaikar, R.~Palanki, B.~Hassibi, and M.~Effros, ``Capacity of
  wireless erasure networks,'' \emph{IEEE Transactions on Information Theory},
  vol.~52, no.~3, pp. 789--804, 2006.

\bibitem{AKT2004extrinsic}
A.~Ashikhmin, G.~Kramer, and S.~ten Brink, ``Extrinsic information transfer
  functions: model and erasure channel properties,'' \emph{IEEE Transactions on
  Information Theory}, vol.~50, no.~11, pp. 2657--2673, 2004.

\bibitem{LPC2013bounds}
G.~Liva, E.~Paolini, and M.~Chiani, ``Bounds on the error probability of block
  codes over the $q$-ary erasure channel,'' \emph{IEEE Transactions on
  Communications}, vol.~61, no.~6, pp. 2156--2165, 2013.

\bibitem{MTS+2022methods}
V.~Manthamkarn, U.~Tuntoolavest, N.~Sriratanapochai, and S.~Chuenjit, ``Methods
  to convert error \& error-erasure channel to erasure channel for fountain
  code decoding,'' in \emph{2022 International Electrical Engineering Congress
  (iEECON)}, 2022, pp. 1--4.

\bibitem{GQ2006coding}
A.~Guillen~i Fabregas and Q.~Tang, ``Coding in the block-erasure channel,'' in
  \emph{2006 Australian Communications Theory Workshop}, 2006, pp. 19--24.

\bibitem{Didier2006new}
F.~Didier, ``A new upper bound on the block error probability after decoding
  over the erasure channel,'' \emph{IEEE Transactions on Information Theory},
  vol.~52, no.~10, pp. 4496--4503, 2006.

\bibitem{Smithson2010ignorance}
M.~Smithson, ``Ignorance and uncertainty,'' in \emph{Tackling Wicked
  Problems}.\hskip 1em plus 0.5em minus 0.4em\relax Routledge, 2010, pp.
  84--97.

\bibitem{Khosravirad2025rateless}
S.~R. Khosravirad, ``Rateless joint source-channel coding, and a blueprint for
  {6G} semantic communications system design,'' \emph{arXiv preprint
  arXiv:2502.06095}, 2025.

\end{thebibliography}

\end{document}